\documentclass[journal, final, letterpaper]{IEEEtran}



\usepackage{amsmath}
\usepackage{amssymb, amsfonts,amsthm}
\usepackage{booktabs} 
\usepackage{mathtools}
\usepackage{commath}
\usepackage{microtype}
\usepackage{multirow}
\usepackage{mathdots}
\usepackage{mathtools}
\usepackage{bm}

\usepackage{eucal}

\usepackage{tikz}
\usepackage{pgfplots}
\pgfplotsset{every axis/.append style={line width=1pt}}
\pgfplotsset{every tick label/.append style={font=\tiny}}
\pgfplotsset{compat=newest}
\usetikzlibrary{shapes, fit, decorations.markings, matrix, plotmarks, positioning, fit, spy, patterns, shadows, calc, backgrounds, arrows.meta}

\usetikzlibrary{positioning}

\usepackage[americanvoltages,siunitx]{circuitikz}

\usepackage[font=footnotesize,caption=false]{subfig}
\usepackage{graphicx}
\usepackage[abs]{overpic}

\usepackage{inputenc}
\usepackage[nolist]{acronym}
\usepackage{siunitx}
\usepackage{booktabs}

\usepackage{enumerate}
\newcounter{subeqn} %

\newtheorem{theorem}{Theorem}
\newtheorem{corollary}{Corollary}
\newtheorem{lemma}[theorem]{Lemma}
\newtheorem{proposition}{Proposition}
\newtheorem{assumption}{Assumption}
\newtheorem{definition}{Definition}

\newcommand{\ts}[1]{{\textnormal{#1}}}
\newcommand{\ie}{\emph{i.e.,}\ }

\newcommand{\Nset}{\mathbb{N}}
\newcommand{\Rset}{\mathbb{R}}

\newcommand{\mc}{\mathcal}
\newcommand{\mb}{\mathbf}
\newcommand{\mbb}{\mathbb}

\newif\ifproves
\provestrue

\allowdisplaybreaks

\begin{acronym}
	\acro{MPC}{Model Predictive Control}
	\acro{DMPC}{Distributed Model Predictive Control}
	\acro{LSS}{Large-Scale System}
	\acro{mRPI}{minimal Robust Positively Invariant}
	\acro{QP}{Quadratic Programming}
	\acro{RCI}{Robust Control Invariant}
	\acro{RPI}{Robust Positive Invariant}
	\acro{PnP}{Plug and Play}
	\acro{OCP}{Optimal Control Problem}
	\acro{MPC}{Model Predictive Control}
	\acro{PV}{Photo-voltaic}
	\acro{DG}{Distributed Generation}
	\acro{DS}{Distributed Storage}
	\acro{BIC}{Bounded Integral Controller}
	\acro{VCS}{Voltage controlled source}
	\acro{MG}{Micro-Grid}
	\acro{OPF}{Optimal Power Flow}
	\acro{SoC}{State of Charge}
	\acro{OCP}{Optimal Control Problem}
	\acro{CPL}{Constant Power Load}	
	\acro{MPPT}{Maximum Power Point Tracking}	
	\acro{WT}{Wind Turbine}
	\acro{PCC}{Point of Common Coupling}
	\acro{DER}{Distributed Energy Resource}
	\acro{MAS}{Multi Agent System}
        \acro{P2P}{peer-to-peer}
        \acro{ARIMA}{Auto-Regressive Integrated Moving Average}
\end{acronym}
\interdisplaylinepenalty=2500
%

\usepackage{cite}
\bibliographystyle{IEEEtran}

\newlength{\hatchspread}
\newlength{\hatchthickness}
\newlength{\hatchshift}
\newcommand{\hatchcolor}{}
\tikzset{hatchspread/.code={\setlength{\hatchspread}{#1}},
         hatchthickness/.code={\setlength{\hatchthickness}{#1}},
         hatchshift/.code={\setlength{\hatchshift}{#1}},
         hatchcolor/.code={\renewcommand{\hatchcolor}{#1}}}
\tikzset{hatchspread=3pt,
         hatchthickness=0.4pt,
         hatchshift=0pt,
         hatchcolor=black}
\pgfdeclarepatternformonly[\hatchspread,\hatchthickness,\hatchshift,\hatchcolor]
   {nwl}
   {\pgfqpoint{\dimexpr-2\hatchthickness}{\dimexpr-2\hatchthickness}}
   {\pgfqpoint{\dimexpr\hatchspread+2\hatchthickness}{\dimexpr\hatchspread+2\hatchthickness}}
   {\pgfqpoint{\dimexpr\hatchspread}{\dimexpr\hatchspread}}
   {
    \pgfsetlinewidth{\hatchthickness}
    \pgfpathmoveto{\pgfqpoint{0pt}{\dimexpr\hatchspread+\hatchshift}}
    \pgfpathlineto{\pgfqpoint{\dimexpr\hatchspread+0.15pt+\hatchshift}{-0.15pt}}
    \ifdim \hatchshift > 0pt
      \pgfpathmoveto{\pgfqpoint{0pt}{\hatchshift}}
      \pgfpathlineto{\pgfqpoint{\dimexpr0.15pt+\hatchshift}{-0.15pt}}
    \fi
    \pgfsetstrokecolor{\hatchcolor}
    \pgfusepath{stroke}
   }
\pgfdeclarepatternformonly[\hatchspread,\hatchthickness,\hatchshift,\hatchcolor]
   {nwld}
   {\pgfqpoint{\dimexpr-2\hatchthickness}{\dimexpr-2\hatchthickness}}
   {\pgfqpoint{\dimexpr\hatchspread+2\hatchthickness}{\dimexpr\hatchspread+2\hatchthickness}}
   {\pgfqpoint{\dimexpr\hatchspread}{\dimexpr\hatchspread}}
   {
    \pgfsetlinewidth{\hatchthickness}
    \pgfpathmoveto{\pgfqpoint{0pt}{\dimexpr\hatchspread+\hatchshift}}
    \pgfpathlineto{\pgfqpoint{\dimexpr\hatchspread+0.15pt+\hatchshift}{-0.15pt}}
    \ifdim \hatchshift > 0pt
      \pgfpathmoveto{\pgfqpoint{0pt}{\hatchshift}}
      \pgfpathlineto{\pgfqpoint{\dimexpr0.15pt+\hatchshift}{-0.15pt}}
    \fi
    \pgfsetstrokecolor{\hatchcolor}
    \pgfsetdash{{1pt}{1pt}}{0pt}
    \pgfusepath{stroke}
   }
\pgfdeclarepatternformonly[\hatchspread,\hatchthickness,\hatchshift,\hatchcolor]
   {nel}
   {\pgfqpoint{\dimexpr-2\hatchthickness}{\dimexpr-2\hatchthickness}}
   {\pgfqpoint{\dimexpr\hatchspread+2\hatchthickness}{\dimexpr\hatchspread+2\hatchthickness}}
   {\pgfqpoint{\dimexpr\hatchspread}{\dimexpr\hatchspread}}
   {
    \pgfsetlinewidth{\hatchthickness}
    \pgfpathmoveto{\pgfqpoint{\dimexpr\hatchshift-0.15pt}{-0.15pt}}
    \pgfpathlineto{\pgfqpoint{\dimexpr\hatchspread+0.15pt}{\dimexpr\hatchspread-\hatchshift+0.15pt}}
    \ifdim \hatchshift > 0pt
      \pgfpathmoveto{\pgfqpoint{-0.15pt}{\dimexpr\hatchspread-\hatchshift-0.15pt}}
      \pgfpathlineto{\pgfqpoint{\dimexpr\hatchshift+0.15pt}{\dimexpr\hatchspread+0.15pt}}
    \fi
    \pgfsetstrokecolor{\hatchcolor}
    \pgfusepath{stroke}
   }
\pgfdeclarepatternformonly[\hatchspread,\hatchthickness,\hatchshift,\hatchcolor]
   {neld}
   {\pgfqpoint{\dimexpr-2\hatchthickness}{\dimexpr-2\hatchthickness}}
   {\pgfqpoint{\dimexpr\hatchspread+2\hatchthickness}{\dimexpr\hatchspread+2\hatchthickness}}
   {\pgfqpoint{\dimexpr\hatchspread}{\dimexpr\hatchspread}}
   {
    \pgfsetlinewidth{\hatchthickness}
    \pgfpathmoveto{\pgfqpoint{\dimexpr\hatchshift-0.15pt}{-0.15pt}}
    \pgfpathlineto{\pgfqpoint{\dimexpr\hatchspread+0.15pt}{\dimexpr\hatchspread-\hatchshift+0.15pt}}
    \ifdim \hatchshift > 0pt
      \pgfpathmoveto{\pgfqpoint{-0.15pt}{\dimexpr\hatchspread-\hatchshift-0.15pt}}
      \pgfpathlineto{\pgfqpoint{\dimexpr\hatchshift+0.15pt}{\dimexpr\hatchspread+0.15pt}}
    \fi
    \pgfsetstrokecolor{\hatchcolor}
    \pgfsetdash{{1pt}{1pt}}{0pt}
    \pgfusepath{stroke}
   }

\markboth{IEEE Transactions on Control of Networked Systems}{}

 
\begin{document}
\title{Incorporating forecasting and peer-to-peer negotiation frameworks into a distributed model predictive control approach for meshed electric networks}

\author{Pablo R. Baldivieso Monasterios$^{1}$, Nandor Verba$^{2}$,  Euan A Morris$^{4}$, \\Thomas Morstyn$^{3}$\thanks{$^{3}$ Thomas Morstyn is with the School of Engineering, University of Edinburgh, Edinburgh, EH9 3JL, UK (e-mail: thomas.morstyn@ed.ac.uk).}, George. C. Konstantopoulos$^{1}$\thanks{$^{1}$ Pablo. R. Baldivieso Monasterios and George. C. Konstantopoulos are with the Department of Automatic Control \& Systems Engineering, University of Sheffield, Sheffield, UK {\tt\footnotesize \{p.baldivieso,g.konstantopoulos\}@sheffield.ac.uk}}, Elena Gaura$^{2}$\thanks{$^{2}$ Nandor Verba and Elena Gaura are with the  Faculty Research Centre for Computational science \& mathematical modelling, Coventry University, Coventry, UK {\tt\footnotesize \{ad2833,csx216\}@coventry.ac.uk}}, and Stephen McArthur$^{4}$\thanks{$^{4}$ Euan A. Morris and Stephen McArthur are with the Department of Electronic and Electrical Engineering, University of Strathclyde, Glasgow, UK {\tt\footnotesize \{euan.a.morris, s.mcarthur\}@strath.ac.uk}}
}
\maketitle

\begin{abstract}
The continuous integration of renewable energy sources into power networks is causing a paradigm shift in energy generation and distribution with regards to trading and control; the intermittent nature of renewable sources affects pricing of energy sold or purchased; the networks are subject to operational constraints, voltage limits at each node, rated capacities for the power electronic devices, current bounds for distribution lines. These economic and technical constraints coupled with intermittent renewable injection may pose a threat to system stability and performance. We propose a novel holistic approach to energy trading composed of a distributed predictive control framework to handle physical interactions, \ie voltage constraints and power dispatch, together with a negotiation framework to determine pricing policies for energy transactions. We study the effect of forecasting generation and consumption on the overall network's performance and market behaviours. We provide a rigorous convergence analysis for both the negotiation framework and the distributed control. Lastly, we assess the impact of forecasting in the proposed system with the aid of testing scenarios.
\end{abstract}

\begin{IEEEkeywords}
Microgrids, Model Predictive Control, Peer-to-peer trading, Smart Local Energy Systems
\end{IEEEkeywords}

\section{Introduction}
\label{sec:introduction}

\IEEEPARstart{T}{he} current landscape of the electricity market is charaterised by the ever growing presence of renewable energy sources and a push for a deregulation of electricity markets. Both of these trends have a similar requirement: a decentralisation of operations, energy generation, control, and billing, which would grant more power to participants in the network \cite{Spence2018}. Decentralisation of control and pricing policies within power networks enhances reliability and flexibility, \ie the systems involved are more responsive to local changes, and modifications do not require global redesigns. These allow participants to engage with each other and perform energy transactions across the system. One of the challenges associated with implementing such decentralisation lie in combining pricing schemes with the control layer \cite{Farhangi2010}.

  A network encompassing renewable energy sources such as \ac{PV} panels, \ac{WT}, and batteries, takes centre stage in the modern energy generation paradigm which has as a defining feature its intermittent generation patterns. The uncertainty introduced by a \ac{DER} affects not only how each device is regulated but how energy is priced across the network. Furthermore several elements participating in a network act as prosumers, \ie entities capable of both generation and consumption of electricity. Traditionally, generation uncertainty could be handled using demand-side management techniques coupled with game theoretic concepts; for example in \cite{Mohsenian-Rad2010}, the authors propose scheduling of household appliances as strategies employed by users to drive electricity prices. Similarly, \cite{Wang2012} analyses the electricity price behaviour in response to the volatility of energy generation and transmission and concludes that the efficiency of market equilibria and average prices coincide with average marginal costs. As an alternative response to the volatility of renewable generation, \ac{P2P} trading platforms have emerged as a viable option, as seen in the Brooklyn \ac{MG} \cite{Mengelkamp2018}, to enable network members to transact energy surpluses or deficits granting flexibility to the grid. These trading platforms can be designed using techniques of game theory, auction theory, blockchain, and constrained optimisation \cite{Tushar2020}; in this paper, however, we focus on the game theoretic aspects of this problem. A concept from game theory heavily utilised in \ac{P2P} networks is that of the Stackelberg equilibrium. In \cite{Tushar2020a}, the authors propose and study a cooperative game theoretic framework where the grid acts as a leader and renewable sources as followers. A multiple leader, multiple follower structure was adopted in \cite{Paudel2019a} where authors show convergence to an equilibrium via simulations. Similar approaches are adopted by \cite{Zhang2019,Wang2019b,Anees2019} where the common thread among them is the existence of an stable Stackelberg equilibrium. The study of \ac{P2P} markets through the lens of game theory offers a rich framework that is capable of handling uncertainties arising from generation and consumption patterns; however, the relationship between these techniques and their physical implementations or relation to the control layers is not clear and requires a rigorous analysis.

  The relation between physical and market layers of a \ac{P2P} trading platform sheds light on the overall network behaviour, \ie how power can be transferred from node to node and the coordination needed to achieve an energy transaction. One crucial aspect to the physical layer is the presence of hard operational constraints imposed by the underlying power system \cite{Guerrero2018}. For example in \cite{Azim2019}, the authors show that careless operation, \ie not taking into account existing constraints, in the market layer may lead to voltage violations which in turn may compromise the entire network. A similar study, see \cite{Nikolaidis2018}, analyses the power flow of a network subject to \ac{P2P} and proposes allocation mechanisms to counteract the effect of power losses at transmission level. An alternative way to tackle this problem is given in \cite{Almasalma2019} where the authors of aim to link both market and control layers by employing similar \ac{P2P} protocols in the power converter voltage regulation; their approach relies heavily on distributed optimisation techniques which highlights the need of distributed methods for regulation at lower levels. Receding horizon techniques have been used before in the context of \ac{P2P} trading schemes as seen in \cite{Langer2020,Morstyn2019} where the prediction capabilities and the inherent constraint handling offer attractive properties for these type of trading platforms. Therefore, understanding the relation between both the market layer and its physical counterpart is crucial to ensure seamless operation of the \ac{MG}. Furthermore, the techniques required to regulate low level devices such as power converters require distributed controllers to maintain the spirit of \ac{P2P} trading. 

  In the excellent review \cite{Tushar2020}, the authors lay future possible research directions, among which they include the need for a unified model (incorporating both physical and market layers), the effect of injection limits in market mechanisms, and charge identification. This paper aims to contribute to bridge this gap in the understanding, using a rigorous mathematical framework, of how physical and market layers interact. Inspired by this framework, in the present paper, we propose a \ac{P2P} trading platform comprising a distributed receding horizon controller to handle the physical layer, and a negotiation based market layer with nonlinear pricing policies. Both control and negotiation layers incorporate forecasts available in their formulation, much in the spirit of \cite{Stephens2015}, to obtain  more reliable pricing and robust control actions, hence reducing the effect of volatile generation patterns. The control layer is capable of handling local constraints, \ie voltage levels or safety operation bounds, and coupled constraints, \ie transmission line capacities. We provide a rigorous framework to guarantee the recursive feasibility properties of our proposed distributed controller. On the other hand, the interaction between both market layer and physical layer is given by the pricing policy; once all prosumers reach an agreement on what pricing policy to use, this function is then passed to the receding horizon controller which incorporates it into its optimisation cost. Similarly to the physical layer, we rigorously analyse the properties of the market layer. The resulting approach combines both layers by feeding the pricing policy obtained in the market layer to the optimisation cost used by the distributed controller in the physical layer.
The contributions of this paper are the following:
\begin{itemize}
\item A rigorous mathematical framework for a novel distributed predictive controller managing interactions between local loads, \ac{DER}s, and the physical network. The controller handles coupled voltage constraints; uses forecasts of generation and demand in addition to neighbouring voltage information to compute its control law. The optimisation cost used by each \ac{DER} contains the pricing policy obtain in the market layer.  Section~\ref{sec:OCP} describes the different components of the \ac{OCP}.
\item A novel agent based market negotiation framework where all network elements engage in bargaining to determine suitable pricing policies used for energy transactions. This policies are used in the Distributed MPC cost such that performance of each \ac{MG} element is optimal with respect to this cost. We show that the outcome of this negotiation process is an Stackelberg equilibrium where the utility grid acts as the leader setting bounds on prices while all \ac{DER}s agree upon suitable policies. Section~\ref{sec:market} states the problem and relevant definitions of this negotiation problem.
\item A holistic algorithm that is capable to integrate the predictive controller,  agent based negotiation framework, and forecasting mechanisms in terms of convergence to game theoretic equilibrium concepts and recursive feasibility of predictive controllers. We propose a testing framework based on scenarios to better understand and evaluate the impact of forecasting approaches, pricing mechanisms, system configurations (\ie modifying rated values for some devices), and network topology on the value that an energy market coupled with a distributed control solution can offer.
\end{itemize}

\textbf{\emph{Notation}}:  For a given graph $\mc{G} = (\mc{V},\mc{E})$ with nodes $\mc{V}$ and edges $\mc{E}\subseteq\mc{V}\times\mc{V}$, the node-edge matrix $\mc{B}\in\Rset^{|\mc{E}|\times|\mc{V}|}$ characterises the relation between nodes and edges which for edge $e = (i,j)\in\mc{E}$ involving nodes $i$ and $j$ can be defined as $[\mc
{B}]_{ei} = 1$ if node $i$ is the source of $e\in\mc{E}$, and $[\mc{B}]_{ej} = -1$ if node $j$ is its sink, and zero otherwise. The $2-$norm is denoted $|x| = \norm{x}_2$. A \emph{C-set} is a compact and convex set containing the origin; A \emph{PC-set} is a C-set with the origin in its nonempty interior. For a given set $\mc{A}\subset\Rset^n$, and linear transformations $B\in\Rset^{m\times n}$ and $C\in\Rset^{n\times p}$, the image of $\mc{A}$ by $B$ is  $B\mc{A} = \{Bx\colon x\in\mc{A}\}\subset\Rset^m$ and the preimage of $\mc{A}$ by $C$ is $C^{-1}\mc{A} = \{x\colon Cx\in\mc{A}\}\subset\Rset^p$.



\section{Physical layer}
\label{sec:problem_setup}

In this section, we introduce the different components of the physical layer. The first component is composition of each prosumer (renewable sources, storage, and local loads), we, then, introduce the physical interconnection properties, and lastly we state the control objective from the perspective of the physical layer. Consider an undirected and connected graph $\mc{G} = (\mc{V},\mc{E})$ defining an electric network. The set of nodes $\mc{V}$ can be partitioned into two disjoint sets $\mc{V}_I$ and $\mc{V}_0$ corresponding to prosumers and utility electric grid respectively.
%
\subsubsection{System Model}
\label{sec:system-model}

Each prosumer $i\in\mc{V}_I$ comprises \ac{DER} sources interfaced via power converters and local loads, see Figure~\ref{fig:ADEPT}. The \ac{DER} sources operate with a maximum point tracking rationale enabling them to extract the maximum possible energy from given environmental conditions. The discrete power dynamics for each $i\in\mc{V}_I$ and $h\in\mc{H}\coloneqq\{\ts{wind},\ts{PV}\}$ are: 
\begin{equation}
S_{h,i}^+ = f_{h,i}(S_{h,i},\Delta_{h,i},w_{h,i})  
\label{eq:power_injection}
\end{equation}
where $S_{h,i}=(P_{h,i},Q_{h,i})$ denotes active and reactive power, $^+$ denotes the successor state, $w_{h,i}$ is uncontrollable input power generated by renewable sources, and $\Delta_{h,i}$ is a control input. Each prosumer has appended to it a battery with dynamics:
\begin{subequations}
\begin{align}
R_{b,i}C_{b,i}  (\ts{SoC}_i^+ - \ts{SoC}_i)  = & -(V(\ts{SoC}_i) - V_{b,i}),\\
R_{b,i}V_{b,i}V(\ts{SoC}_i) =&V_{b,i}^2 + R_{b,i}g_{b,i}(S_{b,i}),
\end{align}
\label{eq:BAT_dyn}	
\end{subequations}
%
where the state $(\ts{SoC}_i,V_{b,i})$ contains the \ac{SoC} and battery DC voltage; $V(\ts{SoC})$ is the \ac{SoC} dependent battery output voltage; $R_{b,i}$ is the internal resistance; and $C_{b,i}$ is the battery capacity in $[\si{\ampere\hour}]$. The nonlinear function $g_{b,i}(\cdot,\cdot)$ determines the power electronic steady-state behavior in terms of desired active and reactive desired power  $S_{b,i}=(P_{b,i},Q_{b,i})$ which we consider as inputs. The power electronic components from all renewable sources exhibit faster dynamic behavior, therefore we can consider each node operating in a \emph{quasi} stationary operation \cite{Venkatasubramanian1995}. The overall state for each node can be summarised in $x_i = (\{S_{h,i}\}_{h\in\mc{H}},{SoC}_{i})$, with control inputs $u_i = (\{\Delta_{w,i}\}_h,S_{b,i})$. Each \ac{DER} is subject to exogenous inputs $w_{g,i} = \{w_{h,i}\}_{h\in\mc{H}}$. In addition, local loads connected to node $i$ draw an \emph{a-priori} unknown active and reactive power $S_{l,i}=(P_{l,i},Q_{l,i})\in\Rset^2$; however, for each $i\in\mc{V}_I$, the controller has access to preview information, \ie forecasts for loads and \ac{DER} which satisfy the following Assumption:
\begin{assumption}[Forecasting information available]\ \ 
    \begin{enumerate}
        \item The state $x_i(k)$ and exogenous input $w_i(k) = (\{w_{h,i}(k)\}_{h\in\mc{H}},S_{l,i}(k))$ are known exactly at time $k$; future external inputs are not known exactly but satisfy $w_i(k+n) \in \mbb{D}_i$ for $n\in\Nset$.
        \item At any time step $k$, a prediction, $\mb{d}_i = \{d_i(k)\}_{k\in\Nset_{0:N-1}}$, of $N$ future exogenous inputs\footnote{$\Nset_{a:b} = \{a,a+1,\ldots,b-1,b\}$ for $a,b\in\Nset$ and $a<b$.}, over a finite horizon of time, is available.
    \end{enumerate}
\label{assum:preview_information}
\end{assumption}
Note that we do not assume anything about the accuracy of the predictions, and in fact will allow these to vary over time (this implicitly implies that previous predictions were not accurate). The states and inputs of each node are restricted to satisfy constraints $x_i\in\mbb{X}_i$ and $u_i\in\mbb{U}_i$, for which the following assumption hold
\begin{assumption}[Constraints]
For each $i\in\mc{V}_I$, the sets $\mbb{X}_i$ and $\mbb{D}_i$ are \emph{C-sets}. The set $\mbb{U}_i$ is a \emph{PC-set}.
\label{assum:constraints}
\end{assumption}
The output of each node $y_i = (P_{o,i},Q_{o,i})$ is given by its power balance equations $  y_i  = \sum_{h\in\mc{H}}S_{h,i}  + S_{b,i} - S_{l,i}$ which following Assumption~\ref{assum:constraints} is bounded for all time $k\in\Nset$.
\begin{figure}[t!]
\centering
\ctikzset{bipoles/resistor/height=0.15}
\ctikzset{bipoles/resistor/width=0.4}
\ctikzset{bipoles/capacitor/height=0.4}
\ctikzset{bipoles/capacitor/width=0.15}
\tikzset{%
wind turbine/.pic={
  \tikzset{path/.style={fill = black!80!white, line join=round,scale = 0.4}}
    \path [path] 
    (-.25,-1.5) arc (180:360:.25 and .0625) -- (.0625,1) -- (-.0625,1) -- cycle;
  \foreach \i in {90, 210, 330}{
      \path [path, shift=(90:1), rotate=\i]
        (0,-0.125) arc (180:0:0.65 and 0.125) -- ++(0,0.125) arc (0:180:0.65 and 0.25) -- cycle;
  }
  \path [path] (0,1) circle [radius=.15];
}}

\tikzset{%
pv panel/.pic={
  \tikzset{path/.style={fill = black!80!white, draw=white, line join=round, scale = 0.3}}
  	\def\rot{60};
    \path [path] 
    (-.25,-2.5) arc (180:360:.25 and .0625) -- (.0625,0) -- (-.0625,0) -- cycle;
  \path [path] 
    ({-2.5-tan(\rot)},1) -- ({2.5-tan(\rot)},1) -- ({2.5+tan(\rot)},-1) -- ({-2.5+tan(\rot)},-1) -- cycle;
    
    \foreach \i in {1,2,3,4,5}{
    \foreach \j in {1,2,3}{
    \path [path] ({(-2.5+\i*0.8)-(-1+\j*0.5)*tan(\rot)},{-1+\j*0.5}) circle [fill = red, radius=.09];
    
	\path [path] 
    ({-2.5-(-1+\j*0.5)*tan(\rot)},-1+\j*0.5) -- ({2.5-(-1+\j*0.5)*tan(\rot)},-1+\j*0.5);
    }
  \path [path] 
    ({(-2.5+\i*0.8)-tan(\rot)},1) -- ({(-2.5+\i*0.8)+tan(\rot)},-1);  
    } 
}}

\tikzset{%
battery/.pic={
  \tikzset{path/.style={fill = black!20!white, draw=black, ultra thick, line join=round, scale = 0.3}}
  \path [path] 
    (-2.5,1.5) -- (-1.5,1.5) -- (-1.5,2)-- (-0.5,2) --
    (-0.5,1.5) -- (0.5,1.5) -- (0.5,2) -- (1.5,2) -- 
    (1.5,1.5)  -- (2.5,1.5) --
    (2.5,-1.5) -- (-2.5,-1.5) -- cycle;
    
    \path [path] 
    (-1,1.3) -- (-1,0.7);
	\path [path]
    (-1.3,1) -- (-0.7,1);
    
	\path [path]
    (1.3,1) -- (0.7,1);
}}

\tikzset{%
load/.pic={
  \tikzset{path/.style={fill = black!80!white, draw=white, line join=round, scale = 0.3}}
  \path [path] 
     (-2,1.5) -- (-0.5,3)-- (-0.5,1.5) --
	 (-0.5,1.5) -- (1,3)-- (1,1.5) --
	 (1,1.5) -- (2.5,3)-- (2.5,1.5) --
	 (2.5,-1.5) -- (-1,-1.5) -- (-1,-0) -- (-2,0) -- (-2,-1.5) -- 
	 (-2.5,-1.5) -- (-2.5,1.5) -- cycle;
}}

\tikzset{%
inverter/.pic={
  \tikzset{path/.style={fill = white, line join=round, scale = 0.4, draw = black, thick}}
    \path [path] 
    (1.3,1.8)  -- (-1.3,1.8) -- (-1.3,-1.8) -- (1.3,-1.8) -- cycle;
    \path [path] 
    (-1.3,1.8)  -- (1.3,-1.8);
    \ifcase#1
    \or
    \path [path]  (-1,-1) sin (-0.75,-0.75); 
	\path [path]  (-0.75,-0.75) cos (-0.5,-1);
	\path [path]  (-0.5,-1) sin (-0.25,-1.25);
	\path [path] (-0.25,-1.25) cos (0,-1);
    
	\path [path] (0,1) sin (0.25,1.25); 
	\path [path] (0.25,1.25) cos (0.5,1);
	\path [path] (0.5,1) sin (0.75,0.75);
	\path [path] (0.75,0.75) cos (1,1);
    
    \or
	\path [path] (-1,-1) sin (-0.75,-0.75); 
	\path [path] (-0.75,-0.75) cos (-0.5,-1);
	\path [path] (-0.5,-1) sin (-0.25,-1.25);
	\path [path] (-0.25,-1.25) cos (0,-1);
    
    \path [path] (0,1.25)  -- (1,1.25);
    \path [path, dashed] (0,1)  -- (1,1);
    
    \or
    \path [path] (-1,-1)  -- (0,-1);
    \path [path, dashed] (-1,-1.25)  -- (0,-1.25);
    
	\path [path] (0,1) sin (0.25,1.25); 
	\path [path] (0.25,1.25) cos (0.5,1);
	\path [path] (0.5,1) sin (0.75,0.75);
	\path [path] (0.75,0.75) cos (1,1);
    
    \or
    \path [path] (0,1.25)  -- (1,1.25);
    \path [path, dashed] (0,1)  -- (1,1);
    
    \path [path] (-1,-1)  -- (0,-1);
    \path [path, dashed] (-1,-1.25)  -- (0,-1.25);
    
	\fi
}}


\begin{circuitikz}[scale=0.6,font=\scriptsize]

\draw 
(3,-3) node{$\ts{Load}$}

(1,1.75) to (2,1.75) to[semithick,/tikz/circuitikz/bipoles/length=1cm,sV] (4,1.75)
(3,3.2) node[anchor=north]{$PCC$}

(-5,3) to[/tikz/circuitikz/bipoles/length=1cm,L,i =$P_\ts{pv}$] (-1.5,3)

(-5,0) to[/tikz/circuitikz/bipoles/length=1cm,L,i =$P_\ts{bat}$] (-1.5,0)

(-5,-3) to[/tikz/circuitikz/bipoles/length=1cm,L,i =$P_\ts{wind}$] (-1.5,-3)

(-1.5,1.75) to[/tikz/circuitikz/bipoles/length=1cm,L] (1,1.75)
(0,2.7) node[anchor=north]{$P_\ts{grid}$}

(-1.5,3) to (-1.5,-1.75)
(-1.5,-3) to (-1.5,-1.75) to (1,-1.75)
;

\draw[ultra thick] (-8,3) -- (-5.5,3);
\path (-8,3) pic {pv panel};
\path (-5,3) pic {inverter = 3};

\draw[ultra thick] (-8,0) -- (-5.5,0);
\path (-8,0) pic {battery};
\path (-5,0) pic {inverter = 3};

\draw[ultra thick] (-8,-3) -- (-5.5,-3);
\path (-8,-3) pic {wind turbine};
\path (-5,-3) pic {inverter = 1};

\path (3,-2) pic {load};
\draw[ultra thick] (1,-0.75) -- (1,-2.75);
\draw[ultra thick] (1,-1.75) -- (2.2,-1.75);

\draw[ultra thick] (1,2.75) -- (1,0.75);
\draw[ultra thick] (4,2.25) -- (4,1.25);


\end{circuitikz}
\caption{Power sources comprising node $\Sigma_i$; local energy sources together with local loads are connected to \ac{PCC}. \vspace{-0.5cm}}
\label{fig:ADEPT}	
\end{figure}
\subsubsection{Network model}
\label{sec:network-model}
The network topology is characterized by the set of edges $\mc{E}$ such that each $e\in\mc{E}$ defines the existence of a physical link between two nodes. The network topology allows us to define the set of neighbours of each $i\in\mc{V}$,
\begin{equation}
\mc{N}_i = \{j\in\mc{V}\colon (i,j)\in\mc{E}\}.
\label{eq:physical_neighbours}
\end{equation}
Similarly, we can define $\mc{E}_i\subseteq\mc{E}$ collecting all those edges emanating or terminating in $i\in\mc{V}$, \ie $\mc{E}_i = \{e\in\mc{E}\colon e = (i,j)\ts{ or }e = (j,i),~j\in\mc{V}\}$.
For each $i\in\mc{V}$, the current delivered by node $i$ is:
\begin{equation}
  i_i = \frac{v_i}{Z_{ii}} + \sum_{(i,j)\in\mc{E}_i} \frac{v_i - v_j}{Z_{ij}},
\label{eq:current_node}
\end{equation}
The admittance $Z_{ij}^{-1} = G_{ij} - jB_{ij}\in\mbb{C}$ corresponds to the line connecting the $i^\ts{th}$ and $j^\ts{th}$ nodes. The current drawn from each $i\in\mc{V}_I$ can be described in terms of $S_i = (P_i,Q_i)$, \ie active and reactive power, as\footnote{The matrix $\mbb{J}_2 = \begin{bmatrix}0 & -1\\1 & 0\end{bmatrix}$ is a complex structure on $\Rset^2$. Any complex number $a+ib$ can be written as a $2\times 2$ matrix or a vector in $\Rset^2$ as \[a+ib \iff \begin{bmatrix}a\\b\end{bmatrix} \iff \begin{bmatrix}a & -b\\b & a\end{bmatrix} = a \begin{bmatrix}1 & 0\\0 & 1\end{bmatrix} + b\begin{bmatrix}0 & -1\\1 & 0\end{bmatrix}\]}
\begin{equation}
i_i = h_i(v_i,S_i) = \frac{1}{|v_i|^2}\bigl( P_iv_i  - Q_i\mbb{J}_2v_i \bigr )
\label{eq:node_current_power}
\end{equation}
Furthermore, the node-edge incidence matrix can be partitioned into $\mc{B} = [\mc{B}_0~~\mc{B}_I]$ corresponding to the utility grid and prosumers. The current balance~\eqref{eq:current_node} for each $i\in\mc{V}_I$ is
\begin{equation}
  h_I(v_I,S_I) - (Y_I + \mc{L}_I)v_I - \mc{B}_I^\top Y_E\mc{B}_0 v_0 = 0,
  \label{eq:power_flow}
\end{equation}
where $\mc{L}_I = \mc{B}_I^\top Y_E\mc{B}_I$ with $Y_E = \ts{diag}\{Z_{ij}^{-1}\colon (i,j)\in\mc{E}\}$ the admittance of each line and $Y_I$ the shunt admittance of $i\in\mc{V}_I$. The vector $v_I = [v_i]_{i\in\mc{V}_I}\in\Rset^{2|\mc{V}_I|}$ collects all node voltages, similarly $S_I$ captures the power injected by prosumers, and $h_I(v_I,S_I) = [h_i(v_i,S_i))]_{i\in\mc{V}_I}\in\Rset^{|\mc{V}_I|}$. The grid voltage is given by $v_0\in\Rset^2$ and can be characterized by following:
\begin{equation}
-(Y_0 + \mc{L}_0)v_0 + Y_0 E_0 - \mc{B}_0^\top Y_E\mc{B}_Iv_I = 0.
\label{eq:grid_voltage}
\end{equation}
Similarly to the previous case, $Y_0$ is a local admittance, and $\mc{L}_0 = \mc{B}_0^\top Y_E\mc{B}_0$. The voltage $E_0$ is generated at the network connection point; the nature of this quantity varies according to the operation mode: stiff grid  $E_0 = (220\sqrt{2},0)$, a weak grid when its magnitude and angle are power dependent, or $E_0 = (0,0)$ in case of an islanded system.
\begin{figure}[!t]
\centering
\ctikzset{bipoles/resistor/height=0.15}
\ctikzset{bipoles/resistor/width=0.4}
\ctikzset{bipoles/capacitor/height=0.4}
\ctikzset{bipoles/capacitor/width=0.15}

\tikzset{%
adept/.pic={
  \tikzset{path/.style={fill = black!10!white, draw=black, semithick, line join=round, scale = 1}}

  \path [path] 
    (-0.55,0.3) -- (0.55,0.3) -- (0.55,-0.3)-- (-0.55,-0.3) -- cycle;
	\node (0,0) {$\Sigma_{#1}$};
}}

\def\Num{10}

\begin{circuitikz}[scale=0.8]


\foreach \x [count=\xc] in {1,2,...,\Num} { 

\ifthenelse{\x = 6}
           {
\draw ({2.5*cos(((360*(\x-1))/\Num))},{2.5*sin(((360*(\x-1))/\Num))}) to ({1*cos(((360*(\x-1))/\Num))},{1*sin(((360*(\x-1))/\Num))});
	\draw ({4.5*cos(((360*(\x-1))/\Num))},{4.5*sin(((360*(\x-1))/\Num))}) to[/tikz/circuitikz/bipoles/length=1cm,L,i =$P_{grid}$] ({2.5*cos(((360*(\x-1))/\Num))},{2.5*sin(((360*(\x-1))/\Num))});
	\draw ({5*cos(((360*(\x-1))/\Num))},{5*sin(((360*(\x-1))/\Num))}) to[/tikz/circuitikz/bipoles/length=1cm,sV]  ({4.5*cos(((360*(\x-1))/\Num))},{4.5*sin(((360*(\x-1))/\Num))});

           }
           {
	\draw ({3.5*cos(((360*(\x-1))/\Num))},{3.5*sin(((360*(\x-1))/\Num))}) to[/tikz/circuitikz/bipoles/length=1cm,L,i =$P_{\xc}$] ({1.5*cos(((360*(\x-1))/\Num))},{1.5*sin(((360*(\x-1))/\Num))}) to ({1*cos(((360*(\x-1))/\Num))},{1*sin(((360*(\x-1))/\Num))});

	\path ({4*cos(((360*(\x-1))/\Num))},{4*sin(((360*(\x-1))/\Num))}) pic {adept = \xc};

           }
%
%
%
%
}

\node[cloud, cloud puffs=7, minimum width= 2.5cm, minimum height=2.5cm, align=center, fill=black!15!white, draw = black, semithick] (cloud) at (0, 0) {};

\node[anchor = south] (0,0) {$\ts{Distribution}$};
\node[anchor = north] (0,0) {$\ts{Network}$};

%

\end{circuitikz}
\caption{Physical network: each node $\Sigma_i$ is interfaced via inductive lines to a distribution network which may have a meshed topology. \vspace{-0.5cm}}
\end{figure}
The network states are given by node voltages $v = (v_0,v_I)$ lying in a constraint set $\mbb{V} = \Rset^2\times\prod_{i\in\mc{V}_I}\mbb{V}_i$ where each set $\mbb{V}_i$ with $i\in\mc{V}_I$ satisfies
\begin{assumption}
The set $\mbb{V}_i\subset\Rset^2$ is a \emph{PC-set}.
\label{assum:network_constraints}
\end{assumption}
Similarly, if the currents flowing through the lines $i_E = Y_E\mc{B}v$ are also constrained to a PC-set $\mbb{I}_E = \prod_{e\in\mc{E}}\mbb{I}_e$, \ie bounds on each RMS current, these induce constraints on voltages by virtue of the algebraic relation $i_E = \mc{B}_0 v_0 + \mc{B}_I v_I$. Following~\eqref{eq:grid_voltage}, the prosumer voltage satisfies $v_I \in \tilde{A}^{-1}(-\tilde{B}E_0\oplus \mbb{I}_E)$, where $\tilde{A} = Y_E(\mc{B}_I - \mc{B}_0(Y_0 + \mc{L}_0)^{-1}\mc{B}_0^\top Y_E\mc{B}_I)$ and $\tilde{B} = Y_E\mc{B}_0(Y_0 + \mc{L}_0)^{-1}Y_0$. The overall constraint set is
\begin{equation}
  \mbb{V}_I =  \tilde{A}^{-1}(-\tilde{B}E_0\oplus \mbb{I}_E)\cap \prod_{i\in\mc{V}_I}\mbb{V}_i,
  \label{eq:voltage_constraints_network}
\end{equation}
and by virtue of Assumption~\ref{assum:constraints} is a PC-set.
\subsubsection{Physical control Objective}
\label{sec:control-objective}
The objective is twofold: find a suitable sequence of triplets  $(x,v,u)$ depending on external renewable injections $d(k)$ for $k\in\{0,1,\ldots\}$ that minimizes the infinite horizon criteria
\begin{equation}
J(x_0,v_0,d) = \sum_{k=0}^{\infty}\ell_k(x(k),u(k),d(k)) 
\label{eq:infinite_cost}
\end{equation}
where $\ell_k$ is a time-varying stage cost comprising generation costs. And second, to derive a suitable stage cost $\ell_k(\cdot,\cdot)$ that captures energy pricing mechanisms allowing for maximum profit at each node in terms of the exogenous inputs (load consumption and renewable injection) and available battery storage.
\vspace{-0.5cm}
\section{Peer-2-peer framework}
In this section, we introduce the proposed \ac{P2P} trading platform. We start by describing the physical controller in terms of the \ac{OCP}, we provide remarks on how the system operates and on its properties. We then proceed to state the proposed trading framework which is modelled as a game with one leader (utility grid) and multiple followers (prosumers). We first state the general approach, and then we propose a tractable reformulation.
\subsubsection{\ac{OCP} for energy system}
\label{sec:OCP}
To achieve our control objectives, consider the finite horizon criteria for each $i\in\mc{V}_I$ that employs exogenous predictions of Assumption~\ref{assum:preview_information}:
\begin{equation}
  J_{i}^{N}(\bar{z}_i,\mb{u}_i,\mb{d}_i) = \sum_{k=0}^{N-1}  \underbrace{\gamma(x_i,v_i,u_i) + \lambda_{k,i}(y_i) }_{\ell_{k,i}(z_i,u_i,d_i)} 
\label{eq:cost}
\end{equation}
with $z_i = (x_i,v_i)$, the stage cost $\ell_{i,k}(z_i,u_i,d_i)$ is composed by two terms capturing the costs of operating each node, in terms of batteries and voltages, and time-varying term $\lambda_{k,i}(\cdot)$ reflecting the cost of energy transactions. A standard Assumption on the stage cost for regularity purposes is
\begin{assumption}[Positive definite stage cost]
$\lambda_{i,k}\colon\mbb{X}_i\times\Rset^2\to\Rset$ and $\gamma_i\colon\mbb{U}_i\times\mbb{X}_i\to\Rset$ are, for each $i \in \mc{M}$ and $k\in\mbb{N}$, continuous positive definite functions.
\label{assump:pd}
\end{assumption}
The former penalises deviations from a reference voltage point and the cost of operating the battery which depend on the power value and state of charge. The latter function is determined from the negotiation framework, see Section~\ref{sec:market}, and  depends on the surplus or deficit power at each node. The performance criteria arguments are sequences of states $\mb{z}_i = \{z_i(0),\ldots z_i(N)\}$ and  controls $\mb{u}_i = \{u_i(0),\ldots u_i(N-1)\}$. Both of these sequences depend on exogenous inputs  $\mb{d}_i = \{d_i(0),\ldots, d_i(N-1)\}$; each $d_i(k)$ comprises predicted renewable injections $\{w_{h,i}(k)\}_{h\in\mc{H}}$, load consumption $S_{l,i}(k)$, and neighboring voltages $\{v_j(k)\}_{j\in\mc{N}_i}$. The resulting optimal control problem for each node for $\bar{z}_i = (\bar{x}_i,\bar{v}_i)$ and available predictions $\mb{d}_i$ is 
\vspace{-0.2cm}
\begin{equation}
  \mbb{P}_i(\bar{z}_i,\mb{d}_i)\colon\min \{J_{i}^{N}(\bar{z}_i,\mb{u}_i,\mb{d}_i) \colon \mb{u}_i\in\mc{U}_{i}^{N}(\bar{z}_i,\mb{d}_i)\}.
\label{eq:ocp}
\end{equation}
The constraint set $\mc{U}_{i}^{N}(\bar{z}_i,\mb{d}_i)$ is defined by 
\begin{subequations}
  \begin{align}
    (x_i &(0), v_i(0))  =  \bar{z}_i,\label{eq:cons:node_eq}\\
    x_{h,i}^+ &= f_{i}(x_{i},u_{i},d_i),\label{eq:cons:node_dyn}\\
        x_i\in\mbb{X}_i,\quad & u_i\in\mbb{U}_i, \quad v_i\in\mbb{V}_I(d_i),\label{eq:cons:ineq_cons}\\
    h_i(v_i,y_i) & = \mc{L}_i v_i + \hat{\mc{L}}_i d_i.\label{eq:cons:algebraic_cons}
  \end{align}
  \label{eq:cons}
\end{subequations}
The prediction model \eqref{eq:cons:node_dyn} differs from its counterpart~\eqref{eq:power_injection}--\eqref{eq:BAT_dyn} in the nature of the exogenous inputs; the former employs sequences of forecasts while the later uses the ``true'' values. This optimization problem is subject to coupled constraints~\eqref{eq:cons:ineq_cons} and \eqref{eq:cons:algebraic_cons} with respect to network voltages. The set $\mbb{V}_I(d_i)$ represents a ``slice'' of $\mbb{V}_I$ corresponding to node $i$ for given fixed values of neighboring voltages; when coupled constraints are absent, \ie no bounds on line currents, the voltage constraint sets are independent from neighbouring information. The matrix $\hat{\mc{L}}_i$ maps $d_i$ to the current balance for node $i$, \ie forming the $i^\ts{th}$ row of~\eqref{eq:power_flow} representing the power flow for node $i$. The solution of $\mbb{P}_i(\bar{z}_i,\mb{d}_i)$ is a sequence of optimal control inputs $\mb{u}_i^0$. One feature of this formulation is the introduction of cooperation between the \ac{MG} nodes; by sharing voltage information, each node $i$ is implicitly aware of power fluctuations from its physical neighbours. Suppose at time $k$, each node $i$ measures $z_i = ({x}_i,{v}_i)$, exchanges voltage prediction sequences, obtains forecasts for renewable injections and load demands such that $\mb{d}_i(k)$ is available; then applies the first element of the optimal solution $\mb{u}_i^0$ of~\eqref{eq:ocp} to the system. At the next sampling time, we discard the existing sequence, measure the plant, obtain new forecasts, then solve~\eqref{eq:ocp} with the updated information. This process is repeated \emph{ad infinitum}.  We discuss the recursive feasibility properties of the above \ac{OCP} in Section~\ref{sec:rec_feas}.
%
%
\subsubsection{Market negotiation}
\label{sec:market}
%


We propose an agent based approach to handle the market layer involving negotiations to choose adequate pricing policies. The market layer of our approach aims to handle power deficits and surpluses at each node $i\in\mc{V}_I$. The time-varying component of the cost~\eqref{eq:cost}, $\lambda_{i,k}(\cdot)$, weighs this output power and is the tool used to interface both the energy trading scheme and lower control levels. In our approach, we propose the use of a nonlinear pricing policy for exporting and importing power as opposed to the traditional linear pricing used in the literature \cite{Tushar2014}. This results in a negotiation framework where the participating nodes decide upon a policy which provides a mechanism to take predicted battery storage levels into account.

This negotiation framework can be interpreted as a game with a leader when the network is connected \cite{Lee2015}. The set of players is given by $\mc{V}_I\cup\mc{V}_0$; the set of actions for each $i\in\mc{V}_I$ is the set of functions $\mc{A}_i = \{\lambda_i \in\mc{L}_2[\mbb{Y}_i,\Rset]\colon \lambda_i (y_i) \geq 0\}$ with $\mbb{Y}_i$ the output constraint set which by Assumption~\ref{assum:preview_information} and \ref{assum:constraints} is compact. The choice of a non-negative function as the decision variable is directly linked with Assumption~\ref{assump:pd} which imposes regularity on the physical layer optimisation problem (existence of a solution to the \ac{OCP}). The action set for $\mc{V}_0$ is given by $\mc{A}_0 = \{(\lambda_{0,s},\lambda_{0,b})\in\Rset^{2}\colon \eta_s\leq \lambda_{0,s}<b_{0,b}\leq\eta_b\}$ representing the cost of purchasing or selling power to the grid. These prices are upper and lower bounded by $\eta_s>0$ and $\eta_b>0$ respectively. The total average network revenue is\footnote{The function $\sigma(x) = \frac{xe^{\alpha x}}{e^{\alpha x}+1}$ for a fixed $\alpha > 0$ is a smooth approximation of $\max(0,x)$. As $\alpha\to\infty$, $\sigma(\cdot)\to \max(0,\cdot)$.}
\begin{equation}
  R = \sum_{i\in\mc{V}_I}\frac{1}{\mu(\mbb{Y}_i)}\int_{Y_i}\lambda_i\circ\sigma d\mu
\label{eq:revenue_network}
\end{equation}
where each pricing policy is averaged over a set $Y_i= (y_i +S_{b,i}^\ts{max}[-\ts{SoC}_i,1-\ts{SoC}_i])\cap\mbb{Y}_i$ reflecting the available power with respect to current storage levels $\ts{SoC}$ and $\mu$ is the Lebesgue measure for $\mbb{Y}_i$. On the other hand, the power purchased by each $i\in\mc{V}_I$ acting as a buyer at a given time is
\begin{equation}
  \label{eq:power_purchased}
  \xi_i(\lambda_i,\lambda_{-i},y) = \frac{\lambda_i(-\sigma(-y_i))}{\sum\limits_{j\in\mc{N}_i^\ts{com}\cup\{i\}}\lambda_j(-\sigma(-y_j))}\sum\limits_{j\in\mc{N}_i^\ts{com}\cup\{i\}}\sigma(y_j) 
\end{equation}
which depends on neighbouring output power $y = (y_1,\ldots,y_{|\mc{V}_I|})$; it is worth noting that the only information needed to compute~\eqref{eq:power_purchased} is that of the trading neighbours. 
Each node behaves as a prosumer and it has attached to it a utility functional $r_i\colon\mc{A}_i\times\mc{A}_{-i}\to\Rset$, defined as
\begin{equation}
  \label{eq:utility_functional}
  \begin{split}
    r_i(\lambda_i,\lambda_{-i})  =  \sum_{k=0}^{N-1}  &\biggl(\lambda_{0,b}\xi_i(\lambda_i,\lambda_{-i},y(k))\\
    & - \lambda_i(\xi_i(\lambda_i,\lambda_{-i},y(k))) \\
    & + \log \bigl (1+\frac{\sigma(y_i(k))}{S_{b,i}^\ts{max}}\bigr) \\
    & + \gamma_iR(k)\frac{\sigma(y_i(k))}{1+\sum_{i\in\mc{V}_i}\sigma(y_i(k))}\biggr).
  \end{split}
\end{equation}
This utility functional measures revenue and satisfaction of each node with its current pricing policy, see \cite{Liu2017a} and \cite{Tushar2014}. There are two prominent parts: the first two terms represent the cost of purchasing with respect to the price set by the utility grid $\lambda_{0,b}>0$ which the agent seeks to minimise; the remaining terms correspond to the advantages of selling surplus power, the effect of available storage, and trade-off between increasing prices and loss of revenue. The choice of a logarithmic function represents a law of diminishing returns. The function $\sigma(\cdot)$ is a key component of this cost, depending on the output power sign, two terms will vanish implying that each node is either maximising profit or minimising costs but never both. This formulation avoids unwanted saddle or conservative behaviour when optimising. This approach induces a time-varying partition of $\mc{V}_I$ into two disjoint sets of sellers $\mc{V}_{I,S}$ when $y_i > 0$, and buyers $\mc{V}_{I,B}$ for $y_i < 0$.

The set $\mc{A}_{-i} \triangleq \prod_{j\in\mc{N}^\ts{com}_i}\mc{A}_j$  collects the actions of the neighbours in the communication network which are used to compute the total revenue $R(k)$ known by the utility grid, characterises the negotiation framework communication properties, and is defined as an unweighted graph characterised by a set of edges $\mc{E}^\ts{com}\subset\mc{V}\times\mc{V}$ which generates a set of neighbours $\mc{N}_i^\ts{com}\subset\mc{V}$ similar to \eqref{eq:physical_neighbours}. The particularity of this network is that for all $i\in\mc{V}_I$, the utility grid, if present, satisfies $0\in\mc{N}_i$ implying each negotiating prosumer can communicate with the utility grid; this condition is necessary since the utility grid agent determines the upper and lower bounds on electricity prices.
%
The utility grid solves the following optimisation problem with equilibrium constraints:
\begin{equation}
\mbb{P}_0^\ts{com}\colon \max\limits_{\lambda_0\in\mc{A}_0}\{r_0(\lambda_0,\lambda_{-0})\colon \lambda_{-0}\in\prod_{j\in\mc{N}_0}\mc{R}_j(\lambda_{-j})\}
\label{eq:leader_opt_problem}
\end{equation}
where the best reply map associated with $\lambda_{-j}$ is  $\mc{R}_j(\lambda_{-j}) = \{\lambda_j\in\mc{A}_j\colon r_j(\lambda_j,\lambda_{-j})\leq r_j(\tilde{\lambda}_j,\lambda_{-j}),~\forall\tilde{\lambda}_j\in\mc{A}_j\}$. The associated utility functional is
\[
  \begin{split}
    r_0(\lambda_0,\lambda_{-0}) = & \lambda_{0,s}\sum_{i\in\mc{V}_I}\sigma(y_i) + \lambda_{0,b}\sum_{i\in\mc{V}_I}\sigma(-y_i) \\
&     - \sum_{i\in\mc{V}_I}\lambda_i(\sigma(y_i))
  \end{split}\]
In this way, the utility grid agent sets the price according to the best response of the network members given by $\mc{V}_I$. Similarly, for each $i\in\mc{V}_I$ the corresponding optimisation problem is
\begin{equation}  \mbb{P}_i^\ts{com}(\lambda_{-j})\colon\max\limits_{\lambda_i\in\mc{A}_i}r_i(\lambda_i,\lambda_{-i})
\label{eq:follower_problem}
\end{equation}
An important feature of the presented approach and its relation to \ac{P2P} platforms lies in the way the revenue of energy transactions is obtained. After the agents are split into groups of buyers and sellers, the total revenue gained from a transaction is shared among the agents selling power. A method to establish direct contracts or bilateral negotiation schemes is the subject of ongoing research. 
\subsubsection*{Tractable reformulation}
The negotiation framework as stated in the previous section may be prohibitively difficult to solve. The action spaces for each $i\in\mc{V}_I$ are infinite dimensional spaces, and to exacerbate the problem, the optimisation problem~\eqref{eq:leader_opt_problem} has equilibrium constraints. We propose a method to simplify this problem to make it computationally tractable for both utility and network elements. Our first step towards this goal is to invoke the following:
\begin{assumption}
  Each node updates its pricing policy with a period $k_n\in\Nset$; the utility grid updates prices every $Hk_n\in\Nset$ for some $H\gg 0$.
  \label{assum:time_scale_negotiation}
\end{assumption}
This Assumption may refer to the common practice of setting a day ahead price based on existing consumption and is common in leader-follower games as mentioned in \cite{Fiez2020}. Furthermore, Assumption~\ref{assum:time_scale_negotiation} ensures that the utility grid is able to react only to the best replies from each network element. 

Another potential bottleneck, from an implementation point of view, is that of the action space infinite dimensionality $\mc{A}_i\subseteq\mc{L}_2[\mbb{Y}_i,\Rset]$. To overcome this hurdle, we propose to reduce the action space to a class of parameterised functions. The starting point is the traditional linear pricing policies, \ie $-\lambda_{0,b}y_i$ and $\lambda_{0,s}y_i$ for purchase and sale respectively. From Assumption~\ref{assum:constraints}, the output power is constrained to a bounded set. We seek a piece-wise smooth such that $\lambda_i(-y_i^\ts{max}) = \lambda_{0,b}y_i^\ts{max}$, $\lambda_i(0) = 0$, and $\lambda_i(y_i^\ts{max}) = \lambda_{0,b}y_i^\ts{max}$. Clearly, it is always possible to find a quadratic function fitting the positive part, and another fitting the negative one. The parameter we introduce is the deviation from a linear pricing: a way to measure this is to consider the area in between curves such that $b_i$ is \[ \int_0^{y_i^\ts{max}}\lambda_{0,s}y_idy - \int_0^{y_i^\ts{max}}\lambda_{i}dy =  \int^0_{-y_i^\ts{max}}-\lambda_{0,b}y_idy-\int^0_{-y_i^\ts{max}}\lambda_{i}dy.\]
Solving the above conditions yield the desired parameterised piece-wise smooth convex parameterisation. The missing ingredient is to satisfy Assumption~\ref{assump:pd}. This can be done by suitably constraining the available values for $b_i \in [0,b_i^\ts{max}]$ such that $\lambda_i(y_i,b_i^\ts{max}) \geq 0$ for all $y_i\in\mbb{Y}_i$. This allows us to assign a mapping between real positive numbers and $\mc{A}_i$ such that $b_i\mapsto\lambda_i(\cdot)$. Injectivity of this map follows naturally from construction; surjectivity, however, is not ensured since the image of the real numbers is not $\mc{A}_i$ but only a strict subset. This discrepancy is because of the inequality condition used to parameterise desired positive definite functions. In this way and owing to the continuity of $b\mapsto\lambda_i(\cdot)$, the problem of optimising over function spaces is reduced to optimising over $\Rset^{|\mc{V}_I|}$. Following Assumption~\ref{assum:time_scale_negotiation}, the game can be played as: 
\begin{enumerate}
  \item At time $k=0,~n=0$, price bounds $\lambda_0 = (\eta_s,\eta_b)$.
  \item\label{it:nash} if $\mod(k,k_n) = 0$, \textbf{do} for all $i\in\mc{V}_I$ solve \eqref{eq:follower_problem} using a best reply updating\\ 
    \textbf{until} A Nash equilibrium $\{\lambda_i^0(\cdot)\}_{i\in\mc{V}_I}$ is found. 
  \item if $mod(k,Hk_n)=0$, \textbf{do} Utility grid solves~\eqref{eq:leader_opt_problem} and updates $\lambda_0$.
  \item $k\to k+1$, go to \ref{it:nash}.
  \end{enumerate}
  The utility grid updates its decision variables in response to optimal pricing profiles obtained by network agents. The negotiation between network and utility grid agents occurs on top of the control layer described in Section~\ref{sec:OCP}. \vspace{-0.3cm}
%
\section{Convergence analysis}
\label{sec:stab-conv-analys}
In this section, we analyse the theoretical properties of the proposed pricing approach. We divide our analysis into two main parts: market negotiation convergence, and recursive feasibility. We finish this Section with remarks on how both market and control layers interact.
\subsection{Negotiation Convergence}
\label{sec:market_neg}
In this section, we study the convergence properties of  $\bigl(\mc{V}_I,\{\mc{A}_i\}_{i\in\mc{V}_I},\{r_i\}_{i\in\mc{V}_I}\bigr)$ as a Stackelberg game. The outcome of this game will define the pricing policy used by each element $i\in\mc{V}_I$. We begin this part of the analysis by defining the equilibrium concepts that will be used: 
\begin{definition}[Nash equilibrium]
  An action profile $\lambda^0 = (\lambda^0_{1},\dots,\lambda^0_{|\mc{V}_I|})$ is said to be a \emph{Nash equilibrium} of the game
  $\bigl(\mc{V}_I,\{\mc{A}_i\}_{i\in\mc{V}_I},\{r_i\}_{i\in\mc{V}_I}\bigr)$ if, for all
  $i \in \mc{V}_I$,
\begin{equation}
  r_i(\lambda^0_{i},\lambda^0_{-i}) = \min_{\lambda_{i} \in \mc{A}_i} r_i(\lambda_{i},\lambda^0_{-i}). 
\end{equation}
\label{def:nash}
 \vspace{-0.6cm}
\end{definition}
The set of Nash equilibrium points for $\mc{V}_I$ parameterised by $\lambda_0\in\mc{A}_0$ are $\mbb{NE}(\lambda_0)\subset\prod_{i\in\mc{V}_I}\mc{A}_i$, this set represents the best network response to the prices set by the utility grid. This concept leads us to the other equilibrium concept we leverage on: 
\begin{definition}[Stackelberg equilibrium]
  An action profile $\lambda^*=(\lambda_{0}^*,\ldots,\lambda_{M}^*)$ is said to be an \emph{Stackelberg equilibrium} of the $1$ leader, $M-$follower game $\bigl(\mc{V},\{\mc{A}_i\}_{i\in\mc{V}},\{r_i\}_{i\in\mc{V}}\bigr)$ if for all $i\in\mc{V}$
\begin{equation} \sup\limits_{\lambda_{-0}\in\mbb{NE}(\lambda_0^0)}r_i(\lambda_0^0,\lambda_{-0})\leq\sup\limits_{\lambda_{-0}\in\mbb{NE}(\lambda_0)}r_i(\lambda_0,\lambda_{-0}) 
\label{eq:st_eq_leader}
\end{equation}
\vspace{-0.3cm}
\label{def:stackelberg}  
\end{definition}
The Stackelberg equilibrium complements that of the Nash equilibrium and essentially leads to an optimal response from the utility grid side in response to the best possible actions from the network side. Following Assumption~\ref{assum:time_scale_negotiation}, it is possible to solve these two problems independently with the caveat that it leads to a problem with equilibrium constraints, see\cite{Ehrenmann2004} for an in-depth study of this type of problems. Given the utility functionals $r_i$ naturally partition the set of nodes, it is possible without loss of generality to choose the buyer nodes to play first. Owing to the parameterisation of the pricing policy, $b_i\mapsto\lambda_i(\cdot)$, is convex by construction. Our first result considers the case when $y_i<0$, \ie node $i$ is purchasing power.
\begin{lemma}
  Suppose $y_i < 0$, then the purchased power in \eqref{eq:power_purchased} is a concave function with respect $b_i\in [0,b_i^\ts{max}]$. 
\label{lem:buyer_convex}  
\end{lemma}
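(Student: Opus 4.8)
The plan is to reduce the claim to the concavity of a single scalar rational function precomposed with the parameterisation $b_i \mapsto \lambda_i(\cdot)$. First I would isolate the dependence on $b_i$. When $y_i < 0$ node $i$ is a buyer, and in \eqref{eq:power_purchased} only the contribution of node $i$'s own policy varies with $b_i$: the factor $\sum_{j}\sigma(y_j)$ depends solely on the fixed output powers, while the neighbouring terms $\lambda_j(-\sigma(-y_j))$ for $j\neq i$ depend on $b_j$, not on $b_i$. Writing $f(b_i) = \lambda_i(-\sigma(-y_i))$, $D = \sum_{j\in\mc{N}_i^\ts{com}}\lambda_j(-\sigma(-y_j)) \geq 0$, and $S = \sum_{j\in\mc{N}_i^\ts{com}\cup\{i\}}\sigma(y_j) \geq 0$, the purchased power collapses to
\[
\xi_i = S\,\frac{f(b_i)}{f(b_i)+D}.
\]

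Next I would establish that $f$ is affine in $b_i$. By construction each branch of $\lambda_i$ is a quadratic pinned at the three data $\lambda_i(\pm y_i^\ts{max}) = \lambda_{0,b}y_i^\ts{max}$ and $\lambda_i(0)=0$, leaving a single free coefficient per branch; measuring the deviation from linear pricing by the enclosed area shows that $b_i$ is an affine function of that coefficient. Inverting, the coefficient, and hence $\lambda_i(y,b_i)$ for every fixed $y$, is affine in $b_i$. In particular $f(b_i) = \lambda_i(-\sigma(-y_i),b_i)$ is affine, and by the action-set constraint $\lambda_i \geq 0$ it is nonnegative, so the denominator $f(b_i)+D$ stays positive.

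The core step is the concavity of $g(t) = t/(t+D)$ for $t \geq 0$ and $D \geq 0$. Writing $g(t) = 1 - D/(t+D)$ exhibits it as a constant minus a nonnegative multiple of the convex map $t \mapsto 1/(t+D)$, so $g$ is concave on $[0,\infty)$ (equivalently $g''(t) = -2D/(t+D)^3 \leq 0$). I would then assemble the pieces: $\xi_i = S\,(g\circ f)(b_i)$ is the composition of the concave $g$ with the affine $f$, scaled by the nonnegative constant $S$; since concavity is preserved under affine precomposition and under multiplication by a nonnegative scalar, $\xi_i$ is concave on $[0,b_i^\ts{max}]$, which is the claim.

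The only delicate point is the parameterisation step: one must verify that taking the enclosed area as the parameter makes $b_i$ enter $\lambda_i$ \emph{affinely} rather than through a genuinely nonlinear relation, since only then does $f$ preserve concavity under composition with $g$. Everything else is the routine sign check on $g''$ together with the standard affine-precomposition rule for concave functions.
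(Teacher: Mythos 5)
Your proposal is correct and follows essentially the same route as the paper: both arguments first use the area parameterisation to show that $\lambda_i(\cdot,b_i)$ is affine in $b_i$, and then verify concavity of the ratio in \eqref{eq:power_purchased} — the paper by directly computing $\partial^2\xi_i/\partial b_i^2 \leq 0$, you by the equivalent (and slightly cleaner) observation that $t\mapsto t/(t+D)$ is concave and concavity is preserved under affine precomposition and nonnegative scaling. Your explicit flagging that the parameter must enter \emph{affinely} (not merely "concavely") for the composition to work is a point the paper glosses over, so no gap here.
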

\begin{proof}
  By construction, $\forall i\in\mc{V}_I$ $\lambda_i(y_i,b_i)\geq 0$ and since the parameterisation is a linear problem, the policy is linear with respect to the parameter. Indeed, the conditions defined in Section~\ref{sec:market_neg} for the negative part of the policy $\lambda_i(y_i) = a_2y_i^2 + a_1y_i$ are \[\begin{split}
      a_2(y_i^\ts{max})^2 - a_1y_i^\ts{max} & = \lambda_{0,b}y_i^\ts{max}\\
\frac{1}{2}\lambda_{0,b}(y_i^\ts{max})^2 + \frac{1}{3}a_2(y_i^\ts{max})^3 + \frac{1}{2} a_1(y_i^\ts{max})^2  &= b_i      
  \end{split}\]
This yields a policy\[\lambda_i(y_i,b_i) = \frac{6b_i}{(y_i^\ts{max})^3}y_i^2 +\frac{6b_i-\lambda_{0,b}(y_i^\ts{max})^2}{(y_i^\ts{max})^2}y_i.\]
This implies that the policy is concave on $b_i$. To prove the purchased power concavity, it is enough to check its derivatives. A direct calculation shows that \[\frac{\partial^2\xi_i}{\partial b_i^2} = -2 \frac{\sum_{l}\sigma(y_l)\sum_{l\neq i}\lambda_l(y_l,b_l)\bigl(\frac{\partial\lambda_i}{\partial b_i}\bigr)^2}{(\sum_{j\in\mc{N}_i^\ts{com}\cup\{i\}}\lambda_j(y_j))^3}<0\]
is negative definite.
\end{proof}
\begin{lemma}
  Suppose $b_i \in [0,b_i^\ts{max}]$ for all $i\in\mc{V}_I$, then $r_i(\lambda_i,\lambda_{-i})$ is a concave function with respect to $b_i$ and $b_{-i}$.
\label{lem:concave_utility}  
\end{lemma}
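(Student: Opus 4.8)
The plan is to use the additive structure of $r_i$ in \eqref{eq:utility_functional}: since a sum of concave functions is concave, I would establish concavity of each of the four pieces of the stage summand and then sum over $k=0,\dots,N-1$. Two of the pieces are immediate. The logarithmic term $\log(1+\sigma(y_i(k))/S_{b,i}^\ts{max})$ depends only on the output power $y_i(k)$, which is fixed by the physical layer and is independent of the pricing parameters, so it is constant in $(b_i,b_{-i})$. The revenue-sharing term is the product of the $b$-independent factor $\sigma(y_i(k))/(1+\sum_{i}\sigma(y_i(k)))$ with $\gamma_i R(k)$; since each policy $\lambda_j(\cdot,b_j)$ is affine in $b_j$ (as shown in the proof of Lemma~\ref{lem:buyer_convex}), the integrals $\int_{Y_j}\lambda_j\circ\sigma\,d\mu$ defining $R(k)$ in \eqref{eq:revenue_network} are affine in the $b_j$, hence this term is affine and therefore concave.

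The substantive pieces are the two purchasing terms, which are active exactly when node $i$ is a buyer, $y_i(k)<0$; in the seller case $\lambda_i(-\sigma(-y_i))=\lambda_i(0)=0$ forces $\xi_i=0$ and both terms vanish, leaving only the constant and affine contributions. For the first purchasing term, $\lambda_{0,b}\xi_i$, concavity in $b_i$ is inherited directly from Lemma~\ref{lem:buyer_convex} together with $\lambda_{0,b}>0$.

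The hard part is the second purchasing term, $-\lambda_i(\xi_i)$, where $b_i$ enters twice: through the affine-in-$b_i$ coefficients of the relevant policy branch and through the argument $\xi_i(b_i)$, which is itself concave. Writing that branch as $\lambda_i(s,b_i)=a_2(b_i)s^2+a_1(b_i)s$ with $a_2,a_1$ affine in $b_i$ and substituting $s=\xi_i(b_i)$, a double differentiation produces cross terms coupling $a_2',a_1'$ with $\xi_i'$ and $\xi_i''$; the standard convex-composition rules do not apply, because the outer quadratic is composed with a concave inner function. I would therefore repeat the direct second-derivative computation used in Lemma~\ref{lem:buyer_convex}, inserting the closed form of $\xi_i$ and the sign data $\partial^2\xi_i/\partial b_i^2<0$, $\xi_i\ge 0$, and $a_2(b_i)>0$ on $[0,b_i^\ts{max}]$, to certify that the aggregate second derivative is nonpositive. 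This sign audit is where I expect the main obstacle to lie, and it must also cover the off-diagonal Hessian entries in $b_{-i}$ that arise from the shared denominator $\sum_{j}\lambda_j(y_j)$ in \eqref{eq:power_purchased}, since the statement claims concavity with respect to both $b_i$ and $b_{-i}$. Once each piece is certified concave or affine, summing over the horizon yields concavity of $r_i$.
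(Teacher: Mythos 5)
Your proposal follows essentially the same route as the paper's proof: a case split on the sign of $y_i$, with the seller case handled by noting the revenue term is affine in $(b_i,b_{-i})$ and the logarithmic term concave, and the buyer case reduced to Lemma~\ref{lem:buyer_convex} plus a direct second-derivative check on $[0,b_i^\ts{max}]$. You are in fact more explicit than the paper about the one delicate point --- that $-\lambda_i(\xi_i(b_i))$ is a convex-in-argument quadratic composed with a concave inner function, so standard composition rules do not apply and the sign audit (including the off-diagonal Hessian entries in $b_{-i}$) must be done by hand --- which the paper's proof dispatches only with ``mutatis mutandis.''
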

\begin{proof}

  The proof proceeds by cases: $i)$ $y_i\geq 0$. In this case, the first two terms of $r_i(\cdot,\cdot)$ vanish. The total revenue is by construction a linear function of $b_i$ and $b_{-i}$ and the logarithmic term is concave with respect to its domain. The resulting function which is a sum of concave functions is therefore concave.

  $ii)$ $y_i < 0$. In this case only the first two terms of $r_i(\cdot,\cdot)$ contribute, the later vanish. Following Lemma~\ref{lem:buyer_convex}, the purchased energy is a concave function of $b_{i}$. The argument follows, \emph{mutatis mutandis}, that of Lemma~\ref{lem:buyer_convex} to obtain the negativity of the second derivative within the range $[0,b_i^\ts{max}]$.
\end{proof}
These two Lemmas lead to our first result:
\begin{theorem}[Network Nash equilibrium]
  The game $(\mc{V}_I,\{\mc{A}_i\}_{i\in\mc{V}_I},\{r_i\}_{i\in\mc{V}_I})$ admits a \emph{Nash equilibrium}.
\label{thm:Nash_eq}
\end{theorem}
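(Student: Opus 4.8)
The plan is to establish existence of a Nash equilibrium for the follower game $(\mc{V}_I,\{\mc{A}_i\}_{i\in\mc{V}_I},\{r_i\}_{i\in\mc{V}_I})$ by appealing to a standard fixed-point existence theorem for concave games, namely the Debreu--Glicksberg--Fan theorem (or Rosen's theorem for concave games). The hypotheses required are that each player's action set is a nonempty, compact, convex subset of a Euclidean space, and that each utility $r_i$ is continuous jointly in all actions and concave in the player's own action for every fixed profile of the others'. I would first invoke the tractable reformulation: the map $b_i\mapsto\lambda_i(\cdot)$ lets us replace the infinite-dimensional action space $\mc{A}_i$ by the interval $[0,b_i^\ts{max}]\subset\Rset$, so that the reduced game is played over $\prod_{i\in\mc{V}_I}[0,b_i^\ts{max}]$, a product of compact convex intervals, and hence a nonempty compact convex subset of $\Rset^{|\mc{V}_I|}$. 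This discharges the action-set hypothesis.

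Next I would verify concavity of each $r_i$ in its own variable $b_i$, which is precisely the content of Lemma~\ref{lem:concave_utility}: for both the seller case $y_i\geq 0$ and the buyer case $y_i<0$, $r_i(\lambda_i,\lambda_{-i})$ is concave in $b_i$ (and jointly in $b_{-i}$) on the admissible range. I would then address continuity: the policy $\lambda_i(y_i,b_i)$ is polynomial, hence smooth, in $b_i$; the smoothing function $\sigma(\cdot)$ is smooth by construction; the power-sharing term $\xi_i$ in \eqref{eq:power_purchased} is a ratio whose denominator $\sum_{j}\lambda_j(y_j,b_j)$ stays strictly positive on the compact domain (using positivity of the policies from Assumption~\ref{assump:pd} together with a genuine buyer having nonzero demand), so $r_i$ is continuous, indeed continuously differentiable, jointly in $(b_i,b_{-i})$.

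With a nonempty compact convex action set, joint continuity of each $r_i$, and own-concavity from Lemma~\ref{lem:concave_utility}, the hypotheses of the concave-game existence theorem are met, and I would conclude that the best-reply correspondence $b\mapsto\prod_i\arg\max_{b_i} r_i(b_i,b_{-i})$ is nonempty-, convex-, and compact-valued with closed graph (upper hemicontinuous), whence Kakutani's fixed-point theorem yields a fixed point $b^0$; the corresponding profile $\{\lambda_i^0(\cdot)\}_{i\in\mc{V}_I}$ is by definition a Nash equilibrium, establishing $\mbb{NE}(\lambda_0)\neq\emptyset$.

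The main obstacle I anticipate is the continuity/positivity of the denominator in $\xi_i$: one must ensure $\sum_{j\in\mc{N}_i^\ts{com}\cup\{i\}}\lambda_j(y_j,b_j)$ never vanishes over the compact parameter box, for otherwise $r_i$ fails to be continuous (or even defined) and the fixed-point machinery breaks down. This requires care because $\lambda_j(0,b_j)=0$ by the parameterisation, so the argument must use that in the buyer case the relevant evaluations are at $-\sigma(-y_j)>0$ for at least the buyers, or that the box $[0,b_i^\ts{max}]$ together with Assumption~\ref{assump:pd} keeps the sum bounded away from zero; pinning down this uniform lower bound rigorously is the delicate step, while the remaining hypotheses follow routinely from the lemmas already proved and the smoothness of the reformulated policies.
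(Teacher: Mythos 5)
Your proposal is correct and follows essentially the same route as the paper: the paper's proof simply invokes the concave-game existence theorem of Ba\c{s}ar and Olsder (cited as \cite[Theorem 4.3]{Basar1998}), relying on the finite-dimensional parameterisation $b_i\mapsto\lambda_i(\cdot)$ over the compact convex box and the concavity established in Lemma~\ref{lem:concave_utility}, exactly as you do. Your additional care about the strict positivity of the denominator in $\xi_i$ is a point the paper leaves implicit, but it does not change the argument's structure.
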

The proof of this theorem follows from \cite[Theorem 4.3]{Basar1998} which is a theorem that guarantees the existence of a Nash equilibrium point with pure strategies given the concavity of the cost functionals. Therefore Theorem~\ref{thm:Nash_eq} guarantees the existing of an optimal piece-wise smooth pricing policy that is dependent on the prices set by the utility grid. The move played by the grid satisfies $\lambda_0^0 = \arg\max r_0(\lambda_0,\lambda_{-0})$ which is by construction a linear problem over a compact set. A consequence of this fact is the existence of a Stackelberg equilibrium between network and utility grid. In summary, the solution of the bi-level optimisation problem \eqref{eq:leader_opt_problem} and \eqref{eq:follower_problem} follows an algorithmic procedure. First, initial utility $\lambda_0(0)$ prices are given; the followers, \ie the elements of $\mc{V}_I$, solve their optimisation problems to find a Nash equilibrium. Once this equilibrium is achieved, the utility grid responds to it by setting new prices for the new iteration $\lambda_0(1)$ and the process repeats until convergence. This approach to pricing guarantees that all $i\in\mc{V}_I$ price their energy in an optimal way according to their own available power, in terms of storage, and information from its neighbours. 
\subsection{Recursive feasibility}
\label{sec:rec_feas}
From Assumption~\ref{assum:preview_information} and \ref{assum:network_constraints}, the exogenous information available to the controller is contained within a PC-set $d_i(k)\in\mbb{D}_i$ for all $k\geq 0$. The set $\mc{D}_{i}^{N} = \prod_{k = 0}^{N-1}\mbb{D}_i$ contains all sequences of length $N$. The challenge arises as a consequence of the receding horizon implementation of the control action: if $\mbb{P}_i(z_i,\mb{d}_i)$ is feasible and yields a solution sequence $\mb{u}_i^0(z_i,\mb{d}_i)$, then its first element, which defines an implicit control law $\kappa_{i}^{N}(z_i,\mb{d}_i)) = u_i^0(0;z_i,\mb{d}_i)$, is applied to the system resulting in the state evolution $z^+ = (x_i^+,v_i^+)$ obtained as a solution of the difference algebraic model~\eqref{eq:power_injection},~\eqref{eq:BAT_dyn}, and~\eqref{eq:current_node}. At this sampling instant a new sequence of information is available to the controller, \ie $\mb{d}_i^+\in\mc{D}_{i}^{N}$, and the problem to be solved is now $\mbb{P}_i(z_i^+,\mb{d}_i^+)$. In this Section, we seek an answer to the question: What are the conditions necessary to ensure $\mbb{P}_i(z_i^+,\mb{d}_i^+)$ has a solution when $\mb{d}_i$ changes (perhaps arbitrarily) to $\mb{d}_i^+$?

We begin by defining \emph{recursive feasibility}, and some useful terminology in the analysis:
\begin{definition}[Recursive feasibility]
For each $i\in\mc{V}$, the \ac{OCP} $\mbb{P}_i(z_i,\mb{d}_i)$ is said to be \emph{recursively feasible} if $ \mc{U}_{i}^{N}(z_i,\mb{d}_i)\neq\emptyset$, then for a successor state $z_i^+ = (x_i^+,v_i^+)$, and $\mb{d}_i^+\in\mc{D}_{i}^{N}$, the constraint set  $\mc{U}_{i}^{N}(z_i^+,\mb{d}_i^+)\neq\emptyset$. 
\label{def:recursive_feas}
\end{definition}
For a disturbance sequence at time $k$, $\mb{d}_i = \{d_i(0),\ldots,d_i(N-1)\}$, its associated $\tilde{k}^\ts{th}$ tail for time $k+1$ is $\tilde{d}_{\tilde{k}}(\mb{d}_i) = \{d_i(1)\ldots,d_i(\tilde{k}-1),d_i(\tilde{k}-1),d_i(\tilde{k}),\ldots,d_i(N-1)\}$. It is clear that both $\mb{d}_i$ and $\tilde{d}(\mb{d}_i)$ belong to the set $\mbb{D}_i$; the notion of the tail allows us to quantify the change in information that a controller is subject to, for two sequences $\mb{d},\mb{e}\in\mc{D}_{i}^{N}$, the distance 
$\rho(\mb{d},\mb{e}) = |\tilde{d}(\mb{d}) - \mb{e}|$ is a metric on the sequence space $\mc{D}_{i}^{N}$.


A system is said to be locally controllable at a point $z_0\in Z_0$ if for every $\varepsilon >0$, $H\in\Nset$ and $\bar{z}$ such that $|z - z_0|\leq \varepsilon$, there exists a finite sequence of controls $\{u(0),\ldots,u(H-1)\}$ such that its solution satisfies $|z(k) - z_0|<\varepsilon$ for all $j\in\Nset_{0:H-1}$ with $z(0) = z_0$ and $z(H) = \bar{z}$. The set of feasible states is $\mc{Z}_{i}^{N}(\mb{d}_i) = \{ z_i\colon\mc{U}_{i}^{N}(z_i,\mb{d}_i)\neq\emptyset\}$ defines the region in the state space such that the \ac{OCP} is feasible. The first of our results is concerned when the exogenous information is unchanging, \ie the future information is taken to be the tail of the initial sequence.

\begin{theorem}
Let $\mb{d}_i\in\mc{D}_{i}^{N}$ and suppose each node is locally controllable with respect to $\mb{d}_i$. If $\mb{d}_i^+ = \tilde{d}_{\tilde{k}}(\mb{d}_i)$, then $(x_i,v_i) \in \mc{Z}_{i}^{N}(\mb{d}_i)$ implies $(x_i^+,v_i^+) \in \mc{Z}_{i}^{N}(\mb{d}_i^+)$.
\label{thm:feasibility_tail}
\end{theorem}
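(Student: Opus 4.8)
The plan is to follow the standard receding-horizon recursive-feasibility template: take a feasible control/state pair for the current problem, shift it to build a candidate for the successor problem, and verify that the candidate lies in $\mc{U}_{i}^{N}(z_i^+,\mb{d}_i^+)$. Since $(x_i,v_i)\in\mc{Z}_{i}^{N}(\mb{d}_i)$, there is a control sequence $\mb{u}_i=\{u_i(0),\ldots,u_i(N-1)\}\in\mc{U}_{i}^{N}(z_i,\mb{d}_i)$ with an associated feasible trajectory $\{z_i(0),\ldots,z_i(N)\}$, $z_i(0)=(x_i,v_i)$, satisfying the dynamics \eqref{eq:cons:node_dyn}, the pointwise constraints \eqref{eq:cons:ineq_cons}, and the algebraic power-flow relation \eqref{eq:cons:algebraic_cons} for the disturbances $d_i(0),\ldots,d_i(N-1)$. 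Because $\mb{d}_i^+=\tilde{d}_{\tilde{k}}(\mb{d}_i)$ is exactly the $\tilde{k}$-th tail, the realized input $u_i(0)$ under the true disturbance $d_i(0)$ drives the plant to the predicted successor $z_i^+=z_i(1)$; thus the successor state coincides with the second element of the nominal trajectory.

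Next I would construct the candidate $\mb{u}_i^+$ for $\mbb{P}_i(z_i^+,\mb{d}_i^+)$ by splitting at the duplicated index. For $k=0,\ldots,\tilde{k}-2$ set $u_i^+(k)=u_i(k+1)$; since $\mb{d}_i^+$ coincides with the shifted disturbances $d_i(k+1)$ on this range, the dynamics reproduce $z_i^+(k)=z_i(k+1)$, so the trajectory arrives at $z_i(\tilde{k})$ at step $\tilde{k}-1$. At the duplicated slot $k=\tilde{k}-1$, where the disturbance is the repeated value $d_i(\tilde{k}-1)$, I would invoke local controllability at $z_i(\tilde{k})$ to select a control $\hat{u}$ that returns (or keeps) the state at $z_i(\tilde{k})$ while remaining within a prescribed $\varepsilon$-ball; this inserted ``hold'' absorbs the single extra step introduced by the duplication. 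For $k=\tilde{k},\ldots,N-1$ set $u_i^+(k)=u_i(k)$, which---because the tail restores the original disturbances $d_i(\tilde{k}),\ldots,d_i(N-1)$ and the state has been returned to $z_i(\tilde{k})$---reproduces the remaining nominal trajectory $z_i(\tilde{k}+1),\ldots,z_i(N)$.

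Feasibility of this candidate would then be checked constraint by constraint. On the shifted and copied portions the pointwise constraints $x_i\in\mbb{X}_i$, $u_i\in\mbb{U}_i$, $v_i\in\mbb{V}_I(d_i)$ and the algebraic relation \eqref{eq:cons:algebraic_cons} hold because these states, inputs, and voltages are inherited verbatim from the already-feasible nominal trajectory. At the inserted slot, the local-controllability step stays within an $\varepsilon$-neighbourhood of the feasible point $z_i(\tilde{k})$, and since the constraint sets are C- or PC-sets with nonempty interior (Assumptions~\ref{assum:constraints} and \ref{assum:network_constraints}), choosing $\varepsilon$ small enough keeps the inserted state, input, and the implicitly determined voltage admissible. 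This exhibits a nonempty $\mc{U}_{i}^{N}(z_i^+,\mb{d}_i^+)$, \ie $(x_i^+,v_i^+)\in\mc{Z}_{i}^{N}(\mb{d}_i^+)$, as required by Definition~\ref{def:recursive_feas}.

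The main obstacle is the duplicated disturbance step: one must produce an admissible control that bridges the repeated value while respecting not only $\mbb{U}_i$ and $\mbb{X}_i$ but also the implicit algebraic voltage constraint \eqref{eq:cons:algebraic_cons}, whose solution $v_i$ is defined only through the power-flow map $h_i$. Reconciling local controllability---which \emph{a priori} guarantees a return to $z_i(\tilde{k})$ only over some horizon $H$---with the single slot available, and ensuring the $\varepsilon$-ball remains inside the implicitly constrained voltage set, is the delicate part; this is precisely where the local-controllability hypothesis and the nonempty interior of the PC-sets are indispensable.
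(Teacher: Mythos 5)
Your proposal is correct and follows essentially the same route as the paper's proof: shift the feasible control sequence to match the shifted portion of the tail, use local controllability to insert an admissible ``bridging'' control at the duplicated disturbance slot that re-synchronises the state with the original trajectory, and then reuse the remaining nominal controls, checking the pointwise and algebraic constraints segment by segment. The only cosmetic difference is that the paper inserts two controls ($\tilde{u}_i$ to stay feasible and $\hat{u}$ to steer back onto $z_i^0(\tilde{k}+2)$) and makes the handling of the difference--algebraic voltage relation explicit via the implicit function theorem, whereas you use a single exact ``hold'' step---both are covered by the stated local-controllability hypothesis.
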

\begin{proof}
  Given $(x_i,v_i) \in \mc{Z}_{i}^{N}(\mb{d}_i)$, then an optimal sequence of control actions $\mb{u}_i^0(z_i,\mb{d}_i) = \{u_i^0(0;z_i,\mb{d}_i),u_i^0(1;z_i,\mb{d}_i),\ldots,u_i^0(N-1;z_i,\mb{d}_i)\}$ exists and generates a sequence of states and voltages $\mb{x}_i^0 = \{x_i^0(0),\ldots,x_i^0(N)\}$ and $\mb{v}_i^0 = \{v_i^0(0),\ldots,v_i^0(N-1)\}$ for a given information $\mb{d}_i\in\mc{D}_{i}^{N}$. Now, we construct a sequence of control actions $\tilde{\mb{u}}_i = \{\tilde{u}(0),\ldots,\tilde{u}(N-1)\}\in\mc{U}_i^{N}(z_i^+,\mb{d}_i^+)$. Using the definition of a disturbance tail, the first $\tilde{k}$ elements of $\mb{d}_i^+$ satisfy $\mb{d}_i^+(k) = \mb{d}_i(k+1)$ implying that $\tilde{\mb{u}}_i(k) = u_i^0(k+1;z_i,\mb{d}_i)$ for all $k\in\{0,\ldots,\tilde{k}-1\}$. The state evolution is governed by a set of difference algebraic equations $F_i(z_i,u_i,d_i,z_i^+) = 0$ composed of $\mc{C}^1$ dynamics~\eqref{eq:power_injection}--~\eqref{eq:BAT_dyn} and \eqref{eq:power_flow}. Using the implicit function theorem, it is possible to locally define a function $\xi_V(\cdot,\cdot,\cdot)$ such that $z_i^+ = \xi_V(z_i,u_i,d_i)$ and ensure the existence of neighbourhoods $V$ and $Z$ such that for $(z_i,u_i,d_i)\in V$, $(z_i,u_i,d_i,\xi_V(z_i,u_i,d_i))\in Z$ and $F_i(z_i,u_i,d_i,\xi_V(z_i,u_i,d_i))=0$. Consider the initial state to be $\tilde{z}_i(0)= (x_i^0(1),v_i^0(1))$, the subsequent states, following $\xi_V$, satisfy $\tilde{z}_i(k) = (x_i^0(k+1),v_i^0(k+1))$ for all $k\in\{0,\ldots,\tilde{k}\}$. The next element satisfies $\tilde{z}_i(\tilde{k}+1) = \xi_V(\tilde{z}_i(\tilde{k}),\tilde{u}_i,d_i(\tilde{k}))$ for some $\tilde{u}_i\in\mbb{U}_i$. Since each node is locally controllable with respect to $\mb{d}_i$, there exists a control action $\tilde{u}_i\in\mbb{U}_i$ such that $\xi_V(z_i,\tilde{u}_i,d_i)\in\mbb{X}_i\times\mbb{V}_I(d_i)$. Similarly, there exists a control law $\hat{u}\in\mbb{U}_i$ such that $\tilde{z}_i(\tilde{k}+2)= z_i^0(\tilde{k}+2) = \xi_V(z_i(\tilde{k}+1),\hat{u},d_i(\tilde{k}+1))$. The resulting sequence satisfies $
      \tilde{\mb{u}}_i = \{u_i^0(1),\ldots,u_i^0(\tilde{k}),\tilde{u}_i, 
       \hat{u}_i,u_i^0(\tilde{k}+2),\ldots,u_i^0(N-1)\} \in\mc{U}(z_i^+,\mb{d}_i^+)$.
    
\end{proof}
Once recursive feasibility of the tail is achieved, we allow the information to vary (perhaps arbitrarily). To study this case, we turn to the continuity properties of both value function and constraints which depend on exogenous inputs and initial states. The set $\Gamma_{i}^{N} = \{(z_i,\mb{d}_i)\colon z_i\in\mc{Z}_{i}^{N}(\mb{d}_i),~\mb{d}_i\in\mc{D}_{i}^{N}\} $ is the graph $\ts{gr }\mc{Z}_i^N$ of the set-valued map corresponding to the feasible set $\mc{Z}_i^N\colon\mc{D}_i^N\to 2^{\mbb{X}_i}$.

\begin{definition}[Upper semicontinuity for set-valued maps]
  A set $\Phi\colon U\to 2^X$ is upper semicontinous at $\xi_0\in U$ if for an open neighbourhood $V_U\subset U$ of $\xi_0$, then for all  $\xi\in\ V_U$  $\Phi(\xi)\subset V_X$ for an open neighbourhood $V_X\subset X$.
\label{def:usc}  
\end{definition}

\begin{proposition}
  Suppose Assumption~\ref{assum:preview_information}--~\ref{assump:pd} hold and the dynamics are continuous. Then the set valued map $\mc{Z}_I^N(\cdot)$ is upper semicontinuous.
\label{prop:Z_usc}  
\end{proposition}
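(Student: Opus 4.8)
The plan is to show upper semicontinuity of the feasible-set map $\mc{Z}_I^N(\cdot)$ by exhibiting its graph $\Gamma_I^N$ as the projection (onto the $(z,\mb{d})$-coordinates) of a compact set, and then invoking the standard fact that a set-valued map with closed graph taking values in a fixed compact set is upper semicontinuous. Concretely, I would first assemble the full ``lifted'' feasible set
\[
  \mc{F}_i = \bigl\{(z_i,\mb{u}_i,\mb{d}_i)\colon \mb{u}_i\in\mc{U}_i^N(z_i,\mb{d}_i),~\mb{d}_i\in\mc{D}_i^N\bigr\},
\]
which is carved out by the constraints~\eqref{eq:cons}: the dynamic recursion~\eqref{eq:cons:node_dyn}, the algebraic coupling~\eqref{eq:cons:algebraic_cons}, and the pointwise membership~\eqref{eq:cons:ineq_cons} in $\mbb{X}_i$, $\mbb{U}_i$, $\mbb{V}_I(d_i)$. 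The key structural observations are (i) by Assumption~\ref{assum:constraints} and Assumption~\ref{assum:network_constraints}, the sets $\mbb{X}_i$, $\mbb{U}_i$, $\mbb{D}_i$, and the induced voltage set~\eqref{eq:voltage_constraints_network} are compact (C- or PC-sets), so that $\mc{F}_i$ lives in a fixed compact ambient set; and (ii) the dynamics being continuous means each constraint defining $\mc{F}_i$ is a closed condition, so $\mc{F}_i$ itself is closed, hence compact.

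Second, I would argue that $\Gamma_i^N = \ts{gr }\mc{Z}_i^N$ is exactly the image of $\mc{F}_i$ under the continuous projection $\pi\colon (z_i,\mb{u}_i,\mb{d}_i)\mapsto(z_i,\mb{d}_i)$, since $z_i\in\mc{Z}_i^N(\mb{d}_i)$ precisely when some admissible control sequence $\mb{u}_i$ completes the triple. Because the continuous image of a compact set is compact, $\Gamma_i^N$ is compact, and in particular \emph{closed}. The closedness of the graph, together with the fact that all values $\mc{Z}_i^N(\mb{d}_i)$ are contained in the fixed compact set $\mbb{X}_i$, is the precise hypothesis of the closed-graph theorem for correspondences (\cf the Closed Graph Theorem for upper semicontinuous maps with compact range, \eg Aubin--Frankowska). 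This delivers upper semicontinuity in the sense of Definition~\ref{def:usc} directly.

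The main obstacle I anticipate is the \emph{algebraic} constraint~\eqref{eq:cons:algebraic_cons} together with the $d_i$-dependence of the voltage set $\mbb{V}_I(d_i)$. Unlike the purely dynamic recursion, the power-flow relation $h_i(v_i,y_i)=\mc{L}_iv_i+\hat{\mc{L}}_id_i$ is \emph{implicit} and nonlinear in the voltages (\cf the $1/|v_i|^2$ factor in~\eqref{eq:node_current_power}), so I must verify that the solution set it defines varies in a closed (indeed, upper semicontinuous) manner as $\mb{d}_i$ moves. The clean way to handle this is to note that the constraint is a continuous equality/membership condition jointly in $(v_i,\mb{d}_i)$: since $h_i$ is continuous on the region where $v_i$ is bounded away from the origin — which is guaranteed because $\mbb{V}_i$ is a PC-set with the origin only in the interior of the \emph{admissible} region, keeping $|v_i|$ positive on the constraint boundary — the zero-set is closed jointly, and intersecting with the compact box preserves closedness. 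I would therefore spend most of the effort confirming that $\{(v_i,d_i)\colon v_i\in\mbb{V}_I(d_i)\}$ is closed, \ie that the ``slice'' map $d_i\mapsto\mbb{V}_I(d_i)$ has closed graph; once this is in hand, the projection argument closes the proof without any further estimates.

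One caveat to flag: upper semicontinuity as I have framed it requires the range to be contained in a fixed compact set, which holds here by Assumption~\ref{assum:constraints}; if that compactness were dropped one would need an additional properness/coercivity argument, but under the standing assumptions the compact-range closed-graph route is both the shortest and the most robust.
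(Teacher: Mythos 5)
Your proposal is correct and follows essentially the same route as the paper's proof: both establish that the graph $\Gamma_i^N$ is closed (the paper via a sequential argument on $G_i(\mb{u}_{i,k},\bar{z}_{i,k},\mb{d}_{i,k})\in\mbb{K}$, you via projecting the compact lifted set $\mc{F}_i$) and then invoke the closed-graph/compact-range criterion for upper semicontinuity, citing compactness of $\mbb{X}_i$, $\mbb{U}_i$, $\mbb{D}_i$ from Assumptions~\ref{assum:constraints} and~\ref{assum:network_constraints}. If anything, your projection framing is slightly tidier, since it makes explicit the compactness of the control sequences that the paper's step ``$\mb{u}_{i,k}\to\bar{\mb{u}}_i$'' tacitly relies on, and your attention to the closedness of the slice map $d_i\mapsto\mbb{V}_I(d_i)$ addresses a detail the paper leaves implicit.
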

\begin{proof}
  The constraints given by~\eqref{eq:cons}, by Assumptions~\ref{assum:constraints} and \ref{assum:network_constraints} together with  the continuity of each node's dynamics, have a structure $\mc{U}_i^N(\bar{z}_i,\mb{d}_i) = \{\mb{u}_i\colon G_i(\mb{u}_i,\bar{z}_i,\mb{d}_i)\in\mbb{K}\}$ for a fixed compact set $\mbb{K}$ and a continuous function~$G(\cdot,\cdot,\cdot)$. This implies that the graph of $\mc{U}_i^N(\cdot,\cdot)$ is a compact set; since the underlying space is finite dimensional, there is a compact neighbourhood $V_\Phi$ such that $\ts{gr }\mc{U}_i^N \subset V_\Phi$. If the set $\Gamma_i^N$ is closed and $\mbb{X}_i\times\mbb{V}_I(\mb{d}_i)$ is compact then by \cite[Lemma 4.3]{Bonnans2000a} our result follows. Consider a converging  sequence $\{\bar{z}_{i,k},\mb{d}_{i,k}\}\subset\Gamma_i^N$; to prove our result, we need to ensure that the limit point $(\bar{z}_i,\mb{d}_i)$ belongs to $\Gamma_i^N$. Using the definition of $\Gamma_i^N$, there exists $\mb{u}_{i,k}$ such that $G_i(\mb{u}_{i,k},\bar{z}_{i,k},\mb{d}_{i,k})\in\mbb{K}$ which by continuity of $G_i(\cdot,\cdot,\cdot)$ yields $\mb{u}_{i,k}\to\bar{\mb{u}}_i $ and  $G_i(\bar{z}_i,\bar{\mb{u}},\mb{d}_i)\in\mc{K}$. The closedness of $\Gamma_i^N$ follows.
\end{proof}
The feasible map $\mc{Z}_i^N(\cdot)$ is the domain of the constraint map $\mc{U}_i^N(\cdot)$. The proof of proposition~\ref{prop:Z_usc} shows the close relation between the set defining constraints and the feasible region; moreover, this results implicitly states that for small deviation in the exogenous inputs $\mb{d}_i$, the resulting optimisation problems associated to the feasible sets have a solution.

The following result builds on the previous ones:
\begin{proposition}[Value function continuity]
  Suppose Assumption~\ref{assum:preview_information}--~\ref{assump:pd} hold, $f_i(\cdot)$ and $h_i(\cdot)$ are continuous, and the set of optimisers of $\mbb{P}_i^N(\bar{z}_i,\mb{d}_i)$ is compact. Then the value function
  \begin{equation}
    \nu_i^{N,0}(\bar{z},\mb{d}_i) = \min\{J_i^N(\bar{x}_i,\mb{u}_i,\mb{d}_i)\colon \mb{u}_i^N\in\mc{U}_i^N(\bar{z}_i,\mb{d}_i)\}
    \label{eq:value_function}
  \end{equation}
is continuous.  
\label{prop:value_function_continuity}  
\end{proposition}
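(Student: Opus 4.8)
The plan is to read $\nu_i^{N,0}$ as the optimal value of a parametric program with parameter $\theta = (\bar{z}_i,\mb{d}_i)$ ranging over the graph $\Gamma_i^N$, and to obtain its continuity from a Berge/maximum-theorem argument in the style of \cite{Bonnans2000a}. Three ingredients are required: joint continuity of the objective, compact-valuedness of the constraint correspondence, and its continuity (both upper and lower semicontinuity). The first is immediate: by Assumption~\ref{assump:pd} each $\gamma_i$ and $\lambda_{k,i}$ is continuous, and since $f_i$ and $h_i$ are continuous the predicted trajectories $(\mb{x}_i,\mb{v}_i)$ depend continuously on $(\mb{u}_i,\bar{z}_i,\mb{d}_i)$ through the model~\eqref{eq:cons:node_dyn}, \eqref{eq:cons:algebraic_cons}; hence $J_i^N$ is a finite sum of continuous functions and is jointly continuous. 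The second follows from Assumptions~\ref{assum:constraints} and \ref{assum:network_constraints}: as shown in the proof of Proposition~\ref{prop:Z_usc}, the feasible set has the form $\mc{U}_i^N(\bar{z}_i,\mb{d}_i) = \{\mb{u}_i\colon G_i(\mb{u}_i,\bar{z}_i,\mb{d}_i)\in\mbb{K}\}$ with $G_i$ continuous and $\mbb{K}$ compact, so each feasible set is a closed subset of the compact sequence set $\prod_{k}\mbb{U}_i$ and is therefore compact; the hypothesis that the optimiser set is compact secures nonemptiness on the feasible region.

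I would then split the claim into the two halves of continuity via a sequential argument along $\theta_n = (\bar{z}_{i,n},\mb{d}_{i,n}) \to \theta = (\bar{z}_i,\mb{d}_i)$ in $\Gamma_i^N$. The lower semicontinuity of $\nu_i^{N,0}$ is the easy direction and uses only the upper semicontinuity of $\mc{U}_i^N$, which is precisely what Proposition~\ref{prop:Z_usc} established (closed graph $\Gamma_i^N$ plus compactness, through \cite{Bonnans2000a}). Taking minimisers $\mb{u}_{i,n}^0 \in \mc{U}_i^N(\theta_n)$, compactness extracts a convergent subsequence $\mb{u}_{i,n_k}^0 \to \bar{\mb{u}}_i$, and the closed-graph property forces $\bar{\mb{u}}_i \in \mc{U}_i^N(\theta)$; continuity of $J_i^N$ then yields $\liminf_n \nu_i^{N,0}(\theta_n) = \lim_k J_i^N(\bar{z}_{i,n_k},\mb{u}_{i,n_k}^0,\mb{d}_{i,n_k}) = J_i^N(\bar z_i,\bar{\mb{u}}_i,\mb{d}_i) \ge \nu_i^{N,0}(\theta)$.

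The remaining half, upper semicontinuity of $\nu_i^{N,0}$, is where I expect the main obstacle. It requires \emph{lower} semicontinuity (inner semicontinuity) of the correspondence $\mc{U}_i^N$: given the minimiser $\mb{u}_i^0$ at the limit parameter $\theta$, I must construct feasible controls $\mb{u}_{i,n} \in \mc{U}_i^N(\theta_n)$ with $\mb{u}_{i,n} \to \mb{u}_i^0$, so that $\limsup_n \nu_i^{N,0}(\theta_n) \le \lim_n J_i^N(\bar{z}_{i,n},\mb{u}_{i,n},\mb{d}_{i,n}) = J_i^N(\bar z_i,\mb{u}_i^0,\mb{d}_i) = \nu_i^{N,0}(\theta)$. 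Lower semicontinuity does not follow from closedness of $\Gamma_i^N$ alone and generically needs a constraint qualification. The natural route is to exploit the \emph{PC-set} structure of $\mbb{U}_i$ and of $\mbb{V}_I(d_i)$ (nonempty interior, Assumptions~\ref{assum:constraints} and \ref{assum:network_constraints}) together with the local controllability hypothesis used in Theorem~\ref{thm:feasibility_tail}, which furnishes strictly feasible (interior) points and hence a Robinson/Slater-type qualification; combined with the implicit representation $z_i^+ = \xi_V(z_i,u_i,d_i)$ already invoked there, this yields the inner semicontinuity of $\mc{U}_i^N$. With both semicontinuities established, the two inequalities close the gap and deliver continuity of $\nu_i^{N,0}$.
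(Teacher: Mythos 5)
Your proposal is correct in substance and follows the same overall route as the paper: both are adaptations of the Berge-type argument in \cite[Proposition 4.4]{Bonnans2000a}, splitting continuity of $\nu_i^{N,0}$ into the two one-sided inequalities, using joint continuity of $J_i^N$, compactness of the optimiser set, and the upper semicontinuity of $\mc{U}_i^N$ established in Proposition~\ref{prop:Z_usc} for the ``$\nu_i^{N,0}(\bar z_i,\mb{d}_i)-\varepsilon \leq \nu_i^{N,0}(\tilde z_i,\tilde{\mb{d}}_i)$'' half. Where you diverge is on the hard half ($\limsup$ bound): the paper disposes of it in one line by asserting that a neighbourhood $V_U$ of the optimiser set meets $\mc{U}_i^N(\tilde z_i,\tilde{\mb{d}}_i)$ for all nearby parameters --- this is exactly hypothesis (iv) of the cited result, a weak inner-semicontinuity condition on the constraint map relative to the solution set, and the paper's justification (``$V_U$ contains an optimal point which belongs to a closed set'') does not actually establish it, since closedness of the graph gives only outer semicontinuity. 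You correctly identify this as the genuine obstacle and propose to derive inner semicontinuity from a Robinson/Slater-type constraint qualification using the nonempty interiors of $\mbb{U}_i$ and $\mbb{V}_I(d_i)$ together with local controllability and the implicit-function representation $z_i^+=\xi_V(z_i,u_i,d_i)$. That is the more honest and more rigorous route (and rightly accounts for the equality constraint~\eqref{eq:cons:algebraic_cons}, for which bare Slater interiority would not suffice), but note that it imports hypotheses --- local controllability in particular --- that appear in Theorem~\ref{thm:feasibility_tail} but not in the statement of this proposition, so as written you would be proving a slightly strengthened version of the claim. One minor technical point: in your lower-semicontinuity step you should first pass to a subsequence of $\theta_n$ attaining the $\liminf$ before extracting the convergent subsequence of minimisers, otherwise the displayed equality $\liminf_n\nu_i^{N,0}(\theta_n)=\lim_k J_i^N(\cdot)$ is not literally justified.
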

\begin{proof}
  The proof is an adaptation of  \cite[Proposition 4.4]{Bonnans2000a} to our setting. The continuity of the value function depends on the upper semicontiuity of the constraint set, which holds by Proposition~\ref{prop:Z_usc}, and the existence of neighbourhoods of the set of optimisers of~\eqref{eq:ocp}, which is guaranteed since these form a compact set by assumption. The compactness of implies that there exists a finite open covering of $\{(\bar{z}_i,\mb{d}_i)\}\times\mc{S}(\bar{z}_i,\mb{d}_i)\subset V_Z\times V_U$, where $\mc{S}(\bar{z}_i,\mb{d}_i) = \arg\min \{J_i^N(\bar{x}_i,\mb{u}_i,\mb{d}_i)\colon \mb{u}_i^N\in\mc{U}_i^N(\bar{z}_i,\mb{d}_i)\}$. In this neighbourhood, the ``almost'' optimal points that satisfy $J_i(\tilde{x}_i,\mb{u}_i,\tilde{\mb{d}}_i) \leq \nu_i^{N,0}(\bar{z}_i,\mb{d}_i) + \varepsilon$ for all $(\tilde{z},\tilde{\mb{d}}_i,\mb{u}_i)\in V_Z\times V_U$. Since the neighbourhood $V_U$ contains an optimal point which belongs to a closed set, the intersection $V_U\cap\mc{U}_i^N(\tilde{z}_i,\tilde{\mb{d}}_i) \neq \emptyset$ which yields: $\nu_i^{N,0}(\tilde{z}_i,\tilde{\mb{d}}_i) \leq \nu_i^{N,0}(\bar{z}_i,\mb{d}_i) + \varepsilon$.

On the other hand, $\nu_i^{N,0}(\bar{z}_i,\mb{d}_i) - \varepsilon \leq J_i(\bar{x}_i,\mb{u}_i,\mb{d}_i)$ holds for all $\mb{u}_i\in\mc{U}_i^N(\bar{z}_i,\mb{d}_i)$. From the proof of Proposition~\ref{prop:Z_usc}, the set $\mc{U}_i^N(\cdot,\cdot)$ is upper semicontinuous and there exists neighbourhoods $V_{U'}$ and $V_{Z'}$ such that for all $(\tilde{z}_i,\tilde{\mb{d}}_i)\in V_{Z'}$ and  $\mb{u}_i\in V_{U'}\cap\mc{U}_i^N(\tilde{z}_i,\tilde{\mb{d}}_i)$, $\nu_i^{N,0}(\bar{z}_i,\mb{d}_i) - \varepsilon \leq J_i(\tilde{x}_i,\tilde{\mb{u}_i},\tilde{\mb{d}}_i)$ holds. Since $\varepsilon >0$ is arbitrary, then $\nu_i^{N,0}(\bar{z}_i,\mb{d}_i) - \varepsilon \leq  \nu_i^{N,0}(\tilde{z}_i,\tilde{\mb{d}}_i)$. Continuity of $\nu_i^0(\cdot)$ follows.
\end{proof}

The continuity of the value function is a crucial property for our objective. Consider a subset $\Omega_{i,\beta}= \{(z_i,\mb{d}_i)\in\Gamma_i^N\colon \nu_i^{N,0} ((z_i,\mb{d}_i)) \leq \beta\}$ for $\beta >0$.

\begin{assumption}
The exogenous input sequence evolves as $\mb{d}_i^+ = \tilde{d}_{\tilde{k}}(\mb{d}_i) + \Delta\mb{d}_i$ where $\Delta\mb{d}_i = \mb{d}^+ - \tilde{d}_{\tilde{k}}(\mb{d}_i)\in\Delta\mc{D}_i^N$. The set $\Delta\mc{D}_i^N$ is chosen such that\footnote{For a set $\mc{A}\subset\mc{B}$, its diameter with respect to $\mc{B}$ is $\ts{diam}_\mc{B}\mc{A} = \max\{|x-y|\colon x,y\in\mc{B},x-y\in\mc{A}\}$.} $\lambda_i = \ts{diam}_{\mc{D}_i^N}\Delta\mc{D}_i^N$ satisfies $\lambda_i \leq \sigma_{\nu,i}^{-1}((id - \gamma_i)(\alpha_i))$ where $\gamma_i$ is a $\mc{K}-$function, and $\alpha_i >0$ such that $\Omega_{i,\alpha}\subset\Gamma_i^N$.
\label{assum:bounded_disturbance}  
\end{assumption}

The overall dynamics for both states and disturbance satisfy
\begin{subequations}
\begin{align}
  F_i(z_i,\kappa_i^N(z_i,\mb{d}_i),d_i,z_i^+) & = 0\label{eq:state}\\
  \mb{d}_i^+ \in \tilde{d}_{\tilde{k}}(\mb{d}_i) + \Delta\mc{D}_i^N\label{eq:dist}
\end{align}
\label{eq:overall_diff_inclusion}
\end{subequations}
We claim that a subset $\Omega_\rho\subset\Omega_\beta$ is a positive invariant set. The implications of this assertion is that the evolution of~\eqref{eq:overall_diff_inclusion} is contained within one of this level sets. This is a nonlinear generalisation to the one presented in \cite{BaldiviesoMonasterios2018}.
\begin{theorem}
  Suppose Assumptions~\ref{assum:preview_information}-- \ref{assump:pd}, \ref{assum:bounded_disturbance} 
\label{thm:positive_invariance} hold and for all $i\in\mc{V}_I$ $(z_i(0)\mb{d}_i(0))\in\Omega_{i,\beta}\subset\Gamma_i^N$ for some $\beta \geq \alpha$. The set $\Omega_{i,\beta}$ is positively invariant for the composite system~\eqref{eq:overall_diff_inclusion}.  
\end{theorem}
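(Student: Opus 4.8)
The plan is to prove invariance by reducing it to a one-step value-function estimate and then iterating. Since $\Omega_{i,\beta}$ is the intersection of the graph $\Gamma_i^N$ with a sublevel set of $\nu_i^{N,0}$, and the theorem already assumes $\Omega_{i,\beta}\subset\Gamma_i^N$, it suffices to show that whenever $(z_i,\mb{d}_i)\in\Omega_{i,\beta}$, every successor pair $(z_i^+,\mb{d}_i^+)$ admitted by the composite dynamics~\eqref{eq:overall_diff_inclusion} satisfies both $(z_i^+,\mb{d}_i^+)\in\Gamma_i^N$ (feasibility) and $\nu_i^{N,0}(z_i^+,\mb{d}_i^+)\leq\beta$. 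Positive invariance then follows by induction on the time step.

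First I would split the disturbance update into its tail and its perturbation, $\mb{d}_i^+ = \tilde{d}_{\tilde{k}}(\mb{d}_i) + \Delta\mb{d}_i$, as in Assumption~\ref{assum:bounded_disturbance}, and treat the intermediate pair $(z_i^+,\tilde{d}_{\tilde{k}}(\mb{d}_i))$. By Theorem~\ref{thm:feasibility_tail} this pair is feasible, and the explicit shifted sequence $\tilde{\mb{u}}_i=\{u_i^0(1),\ldots,u_i^0(\tilde{k}),\tilde{u}_i,\hat{u}_i,u_i^0(\tilde{k}+2),\ldots,u_i^0(N-1)\}$ constructed there is admissible for $\mbb{P}_i(z_i^+,\tilde{d}_{\tilde{k}}(\mb{d}_i))$. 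Evaluating $J_i^N$ along $\tilde{\mb{u}}_i$ discards the first (nonnegative, by Assumption~\ref{assump:pd}) stage cost of the original optimal trajectory while inserting the two steering stages, and I would use local controllability together with positive definiteness of $\ell_{k,i}$ to bound the cost of those inserted stages by the current value. This should collapse to the contraction $\nu_i^{N,0}(z_i^+,\tilde{d}_{\tilde{k}}(\mb{d}_i))\leq\gamma_i\bigl(\nu_i^{N,0}(z_i,\mb{d}_i)\bigr)$ with the $\mc{K}$-function $\gamma_i$ of Assumption~\ref{assum:bounded_disturbance}.

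Next I would absorb the perturbation $\Delta\mb{d}_i$ using continuity. Since $\nu_i^{N,0}$ is continuous on the compact graph $\Gamma_i^N$ (Proposition~\ref{prop:value_function_continuity}), it is uniformly continuous and admits a modulus of continuity $\sigma_{\nu,i}$ in its disturbance argument; feasibility at the perturbed point $(z_i^+,\mb{d}_i^+)$ follows from upper semicontinuity of $\mc{Z}_i^N$ (Proposition~\ref{prop:Z_usc}, and the remark after it) once $|\Delta\mb{d}_i|\leq\lambda_i$ is small enough, which Assumption~\ref{assum:bounded_disturbance} guarantees. Combining the two bounds gives
\[
\nu_i^{N,0}(z_i^+,\mb{d}_i^+)\leq\nu_i^{N,0}\bigl(z_i^+,\tilde{d}_{\tilde{k}}(\mb{d}_i)\bigr)+\sigma_{\nu,i}(\lambda_i)\leq\gamma_i\bigl(\nu_i^{N,0}(z_i,\mb{d}_i)\bigr)+\sigma_{\nu,i}(\lambda_i).
\]
Using $\nu_i^{N,0}(z_i,\mb{d}_i)\leq\beta$, monotonicity of $\gamma_i$, and the calibration $\sigma_{\nu,i}(\lambda_i)\leq(\ts{id}-\gamma_i)(\alpha_i)\leq(\ts{id}-\gamma_i)(\beta)$ (valid for $\beta\geq\alpha$ provided $\ts{id}-\gamma_i$ is nondecreasing on $[\alpha,\beta]$), I would conclude $\nu_i^{N,0}(z_i^+,\mb{d}_i^+)\leq\gamma_i(\beta)+\bigl(\beta-\gamma_i(\beta)\bigr)=\beta$, which closes the one-step argument and hence the induction.

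I expect the tail contraction $\nu_i^{N,0}(z_i^+,\tilde{d}_{\tilde{k}}(\mb{d}_i))\leq\gamma_i(\nu_i^{N,0}(z_i,\mb{d}_i))$ to be the main obstacle: the \ac{OCP}~\eqref{eq:ocp} carries no terminal cost or terminal constraint, so the usual value decrease is not automatic and must be extracted as a cost-controllability estimate from the local controllability assumption and the positive definiteness of the stage cost. A secondary technical point is to justify that $\sigma_{\nu,i}$ can be taken as a genuine $\mc{K}$-class modulus of continuity that is uniform over $\Omega_{i,\beta}$ and compatible with the tail metric $\rho$, and that the upper semicontinuity of $\mc{Z}_i^N$ indeed yields nonemptiness of $\mc{U}_i^N(z_i^+,\mb{d}_i^+)$ under the perturbed disturbance rather than merely a non-expansion of the feasible set.
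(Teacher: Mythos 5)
Your proposal follows essentially the same route as the paper's proof: tail feasibility via Theorem~\ref{thm:feasibility_tail}, a value-function decrease recast as the contraction $\nu_i^{N,0}(z_i^+,\tilde{d}_{\tilde{k}}(\mb{d}_i))\leq(\ts{id}-\theta_{3,i}\circ\theta_{2,i}^{-1})\,\nu_i^{N,0}(z_i,\mb{d}_i)$, uniform continuity of $\nu_i^{N,0}$ to absorb $\Delta\mb{d}_i$ through a $\mc{K}$-modulus $\sigma_{\nu,i}$, and the calibration of Assumption~\ref{assum:bounded_disturbance} to land back in the sublevel set. The obstacle you flag --- that without a terminal cost or constraint the Lyapunov decrease is not automatic and must be extracted from local controllability and positive definiteness of the stage cost --- is real, and is precisely the step the paper passes over with an appeal to ``classical results from the MPC literature.''
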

\begin{proof}
  Consider $(z_i,\mb{d}_i)\in\Omega_{i,\beta}$, from Theorem~\ref{thm:feasibility_tail}, the optimisation problem remains feasible when the available exogenous sequence assumes the $\tilde{k}^\ts{th}-$tail. Following classical results from the MPC literature, feasibility of the optimisation problem implies stability. A consequence of this is that the value function is a Lyapunov function, \ie $\nu_i^{N,0}(z_i^+,\tilde{d}_{\tilde{k}}(\mb{d}_i)) \leq \nu_i^{N,0}(z_i,\mb{d}_i) - \theta_{3,i}(|z_i|)$ and $\theta_{1,i}(|z_i|)\leq \nu_i^{N,0}(z_i,\mb{d}_i) \leq \theta_{2,i}(|z_i|)$ for some $\theta_{3,i},\theta_{2,i},\theta_{1,i}\in\mc{K}$. On the other hand, the continuity of $\nu_i^{N,0}$ over a compact set implies that is uniformly continuous on that set. From \cite[Lemma 1]{Limon2009}, there exists a $\mc{K}_\infty-$function $\alpha\sigma_{\nu,i}$ such that \[\nu_i^{N,0}(z_i^+,\mb{d}_i^+) \leq \nu_i^{N,0}(z_i,\mb{d}_i) - \theta_i(|z_i|) + \sigma_{\nu,i}(|\mb{d}_i^+ - \tilde{d}_{\tilde{k}}(\mb{d}_i)|)\]Using Assumption~\ref{assum:bounded_disturbance}, we obtain \[
    \begin{split}
      \nu_i^{N,0}(z_i^+,\mb{d}_i^+) &\leq  \nu_i^{N,0}(z_i,\mb{d}_i) - \theta_{3,i}(|z_i|) + (\rho_i - \gamma_i)(\alpha_i)\\
      & \leq (id - \theta_{3,i}\circ\theta_{2,i}^{-1})\nu_i^{N,0}(z_i,\mb{d}_i) + (\rho_i - \gamma_i)(\beta_i)\\
      & \leq (id - \theta_{3,i}\circ\theta_{2,i}^{-1})(\beta_i) + (\rho_i - \gamma_i)(\beta_i)
    \end{split}
  \]
  Taking $\gamma_i = id - \theta_{3,i}\circ\theta_{2,i}^{-1}$ yields $\nu_i^{N,0}(z_i^+,\mb{d}_i^+) \leq \beta_i$ which implies that $(z_i^+,\mb{d}_i^+)\in\Omega_{i,\beta}$
\end{proof}

The main assertion of this section is a consequence of theorem~\ref{thm:positive_invariance}.

\begin{corollary}[Recursive Feasibility] If $z_i(0)\in\{z_i\colon (z_i,\mb{d}_i)\in\Omega_{i,\beta}\}\subset\mc{Z}_i^N(\mb{d}_i)$, then $z_i(k)\in\mc{Z}_i^N(\mb{d}_i(k))$ provided the exogenous inputs update rate is limited. 
\end{corollary}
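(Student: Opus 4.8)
The plan is to obtain this corollary as an almost immediate consequence of the positive invariance established in Theorem~\ref{thm:positive_invariance}, the single extra ingredient being the observation that the level set $\Omega_{i,\beta}$ is contained in the graph of the feasible map. Recall that $\Omega_{i,\beta} = \{(z_i,\mb{d}_i)\in\Gamma_i^N\colon \nu_i^{N,0}(z_i,\mb{d}_i)\leq\beta\}$, so by definition every pair $(z_i,\mb{d}_i)\in\Omega_{i,\beta}$ satisfies $(z_i,\mb{d}_i)\in\Gamma_i^N = \ts{gr }\mc{Z}_i^N$, which is precisely the statement $z_i\in\mc{Z}_i^N(\mb{d}_i)$, i.e. the \ac{OCP} $\mbb{P}_i(z_i,\mb{d}_i)$ is feasible. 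Thus the projection of the invariant level set onto the state component is exactly a subset of the feasible region, matching the inclusion assumed for $z_i(0)$ in the statement.

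First I would set up an induction on the time index $k$. The base case $k=0$ holds by hypothesis, since $z_i(0)\in\{z_i\colon(z_i,\mb{d}_i)\in\Omega_{i,\beta}\}$ gives $(z_i(0),\mb{d}_i(0))\in\Omega_{i,\beta}$. For the inductive step, assuming $(z_i(k),\mb{d}_i(k))\in\Omega_{i,\beta}$, I would invoke Theorem~\ref{thm:positive_invariance}: the closed-loop evolution governed by the composite inclusion~\eqref{eq:overall_diff_inclusion}, with the control $\kappa_i^N(z_i,\mb{d}_i)$ applied and the disturbance updated according to~\eqref{eq:dist}, keeps the pair inside $\Omega_{i,\beta}$, so $(z_i(k+1),\mb{d}_i(k+1))\in\Omega_{i,\beta}$. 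Projecting through the graph inclusion then yields $z_i(k+1)\in\mc{Z}_i^N(\mb{d}_i(k+1))$, closing the induction and giving $z_i(k)\in\mc{Z}_i^N(\mb{d}_i(k))$ for all $k$.

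The phrase ``provided the exogenous inputs update rate is limited'' is where care is needed, and I expect it to be the main subtlety rather than a computational obstacle. It must be translated into the precise hypothesis of Assumption~\ref{assum:bounded_disturbance}: the deviation $\Delta\mb{d}_i = \mb{d}_i^+ - \tilde{d}_{\tilde{k}}(\mb{d}_i)$ must lie in a set $\Delta\mc{D}_i^N$ whose diameter $\lambda_i$ obeys $\lambda_i\leq\sigma_{\nu,i}^{-1}((id-\gamma_i)(\alpha_i))$. This is exactly the bound that renders the value-function estimate in the proof of Theorem~\ref{thm:positive_invariance} contractive, so the disturbance-induced increase $\sigma_{\nu,i}(|\mb{d}_i^+-\tilde{d}_{\tilde{k}}(\mb{d}_i)|)$ cannot push $\nu_i^{N,0}$ above the level $\beta$. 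I would therefore state explicitly that a ``limited update rate'' means the realized disturbance sequences satisfy the tail-plus-bounded-deviation structure~\eqref{eq:dist} at every step, so that the invariance theorem is applicable along the entire trajectory. Finally, I would note that no separate feasibility argument is needed at the switching instants of the tail index $\tilde{k}$: Theorem~\ref{thm:feasibility_tail} already supplies a feasible candidate for the pure-tail update, and Assumption~\ref{assum:bounded_disturbance} absorbs the extra perturbation, so the two results compose. The only genuine obstacle is bookkeeping, namely verifying that the disturbance realizations driving the plant are admissible for the inclusion~\eqref{eq:dist} at each $k$, rather than any new estimate.
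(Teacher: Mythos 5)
Your proposal is correct and follows essentially the same route as the paper: the corollary is obtained as a direct consequence of the positive invariance of $\Omega_{i,\beta}$ established in Theorem~\ref{thm:positive_invariance}, with the ``limited update rate'' condition identified with Assumption~\ref{assum:bounded_disturbance}. The paper states this derivation only in one line, so your explicit induction and the projection through $\Omega_{i,\beta}\subset\Gamma_i^N = \ts{gr }\mc{Z}_i^N$ simply make precise what the paper leaves implicit.
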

Recursive feasibility is obtained then as a consequence of the stability properties of the predictive controllers. One of the key components of the analysis is given by Assumption~\ref{assum:bounded_disturbance}, this assumption is not strong since the rate of change of the exogenous renewable injection can be controlled using the control inputs $\Delta_{h,i}$ of \eqref{eq:power_injection}, and the voltages depend on the Laplacian of the connectivity graph which (provided the current injections are bounded) implies neighbouring voltages remain close to the kernel $\ker \mc{L} = \ts{span}{~x{1}_{|\mc{V}_I|+1}}$. The negotiation framework modifies the cost according to the change in exogenous inputs, a variation in load demand or renewable injection will result in a potential update in the pricing policy. If Assumption~\ref{assum:bounded_disturbance} holds, then the controller can handle variations in its cost criteria. On the other hand, the sources of potentially large variations arise from sudden load demands; renewable injections can be kept within prescribed variation using the available inputs $\Delta_i$ and local storage. This section established conditions to guarantee feasibility of the \ac{OCP} with respect to changing forecasts. These conditions follow from regularity, see \ref{prop:Z_usc} and \ref{prop:value_function_continuity}, of the optimisation problem which is robust to bounded variations of forecasts as shown in Theorem~\ref{thm:positive_invariance}.




\section{Testing and evaluation environment}
\label{sec:test-eval-envir}
In this section, we illustrate the properties of the proposed combination of negotiation framework and distributed control. First we briefly introduce the method to obtain the predicted sequences $\mb{d}_i$; we explore the coupled constraint satisfaction properties of our approach in a simplified example. Lastly, we apply the proposed approach to a \ac{MG} with a meshed topology and analyse its performance according to its effects on power flows, voltages, pricing and overall costs. \vspace{-0.5cm}
%
\subsection{Forecasting}
\label{sec:forecast}
Assumption~\ref{assum:preview_information} lays the foundations of the proposed approach; the distributed MPC controller at each time $k$ has access to forecasts of generation and consumption of power along a finite horizon, \ie $\{d(k),d(k+1),\ldots,d(k+N)\}$. In this section, we briefly discuss how we construct these exogenous inputs $d_i = (\{w_{h,i}\}_{h\in\mc{H}},S_{L,i})\in\mbb{D}_i$ where $\mc{H}$ denotes a set of the available renewable source, \ie wind, \ac{PV}, etc. The requirement of $N-$step prediction sequences is in line with the seasonality of consumption and renewable injection; therefore the Seasonal \ac{ARIMA} represents a natural choice for forecasting. Initially introduced in \cite{box1976time}, \ac{ARIMA} is a well known forecasting method that uses  linear combinations of past data $d_i(k-j)$ and errors $e_i(k-j)$ for $j\in\Nset$. These linear combinations are polynomials $T(\cdot)$ and $S(\cdot)$ of degree $p$ and $q$ with respect to the backward shifting operator, \ie  $\Delta^{j}d(k) = d(k-j)$. The prediction and forecasting error relation at time $k$ is $T(\Delta) d_i(k) = S(\Delta)e_i(k)$ with $T(\Delta)$ containing factors of $(1-\Delta)^d$. The numbers $(p,d,q)$ completely characterise this method. The Seasonal \ac{ARIMA} (SARIMA) employs the operator $\Delta_D d_i(k) = d_i(k-D)$, then the forecasting model uses $\tilde{T} = T(\Delta_D)$ and $\tilde{S} = S(\Delta_D)$.

Both \ac{ARIMA} and S\ac{ARIMA} are commonly applied time-series based statistical models with low processing requirements as compared to deep neural networks models. The work in \cite{fang2016evaluation} shows that even without considering social and environmental factors S\ac{ARIMA} can achieve an Average Mean Percentage Error (AMAPE) of $9.44\%$ on load forecasting with a 72h time-frame. Due to the lack of daily or weekly seasonality in wind turbine generation ARIMA models are more suitable for forecasting, with the work in \cite{kavasseri2009day} showing a forecast accuracy for the square root of the Mean Square Error $(\sqrt{MSE})$ of $11.87\%$ with a 48h forecast window. 
For our testing data, the selected models for each component together with their Akaike information criterion (AIC) and Normalised Root Mean Square Error (NRMSE) are shown in Table \ref{tab:for_info}. The AIC value is based on training the models across all the scenarios and the NRMSE value is the average across all the scenarios when applied daily as a 24h forecast.\vspace{-0.3cm} 
\begin{table}[b!]
  \setlength{\extrarowheight}{1pt}
\caption{Training parameters and accuracy of forecasting models as applied to the testing scenarios}
\label{tab:for_info}
\begin{tabular}{p{2.11cm} c c c c}
\hline
 Parameter & \begin{tabular}[c]{@{}l@{}}Model\\ Type\end{tabular} & \begin{tabular}[c]{@{}l@{}}Parameters\\ (p,d,q)(P,D,Q)\end{tabular} & AIC & NRMSE \\ \hline
Consumption                   & SARIMA & (1,1,1)(0,1,1)  & 349.5  & 12.0\% \\
WT  Generation       & ARIMA & (2,0,1)  & 4344.9 & 23.3\% \\
PV Generation  & SARIMA & (1,1,2)(0,2,2) & 2177.1 & 12.4\% \\ \hline
\end{tabular}
\end{table}

\subsection{Voltage constraints}
\label{sec:voltage-constraints}
\begin{figure}[t!]
  \centering
%
\definecolor{mycolor1}{rgb}{0.00000,0.44700,0.74100}%
\definecolor{mycolor2}{rgb}{0.85000,0.32500,0.09800}%
\definecolor{mycolor3}{rgb}{0.92900,0.69400,0.12500}%
\definecolor{mycolor4}{rgb}{0.49400,0.18400,0.55600}%
\begin{tikzpicture}

\begin{axis}[%
width=0.8\linewidth,
height=0.3\linewidth,
at={(0\linewidth,0\linewidth)},
scale only axis,
xmin=0,
xmax=15,
xlabel = {time $[\si{\minute}]$},
ylabel = {$||i_{e,2}||_\infty$ $[\si{\ampere}]$},
xtick = {0,2,4,6,8,10,12,14},
ytick = {0,1},
xticklabels = {$0$,$20$,$40$,$60$,$80$,$100$,$120$,$140$},
yticklabels = {$0$,$I_{e,2}^\ts{max}$},
ymin=0,
ymax=1.2,
legend style={legend cell align=left,align=left,draw=white!15!black}
]

\addplot[color=black,line width=2.0pt] plot table[row sep=crcr] {%
0       1\\
15	1\\
};\label{fig:volt:imax}

\addplot[const plot,color=mycolor2,solid,line width=2.0pt] plot table[row sep=crcr] {%
0	0.632015700365307\\
1	0.750127717962394\\
2	0.707423817167265\\
3	0.724509451413359\\
4	0.718511492612867\\
5	0.721132234889911\\
6	0.720417287910443\\
7	0.721106347855125\\
8	0.721140536787486\\
9	0.7214187889216\\
10	0.721627823183422\\
11	0.721872489040015\\
12	0.722128859636612\\
13	0.72256419336431\\
14	0.722792412856301\\
15	0.722874032739443\\
};\label{fig:volt:icd}

\addplot[const plot,color=mycolor1,dashed,line width=2.0pt] plot table[row sep=crcr] {%
0	1.12778271274036\\
1	0.560371077062865\\
2	0.774047304478978\\
3	0.696889455286738\\
4	0.72802891395019\\
5	0.717132576729065\\
6	0.72179153576074\\
7	0.720527598535845\\
8	0.721339292277261\\
9	0.721335050095978\\
10	0.721656576525809\\
11	0.721860372144398\\
12	0.722133019665973\\
13	0.722562439853866\\
14	0.722793014783041\\
15	0.722873778970551\\
};\label{fig:volt:ind}

\end{axis}
\end{tikzpicture}%
  \caption{Normalised line current norm $||i_{e,2}||_\infty$ for a system with coupled constraints~\eqref{fig:volt:icd} and the system without line current constraints~\eqref{fig:volt:ind}. The upper bound  \eqref{fig:volt:imax} is $I_{e,2}^\ts{max} = 35\si{\ampere}$.\vspace{-0.7cm}}
  \label{fig:volt_constraints}
\end{figure}
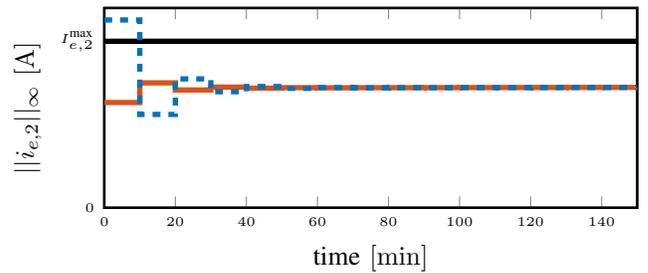
To visualise the constraint satisfaction properties , we consider initially a simplified example that comprises a network with $\mc{V} = \{0,1,2\}$ and $\mc{E}= \{(0,1),(1,2))\}$. The voltage at each node $v_i = (v_{d,i},v_{q,i})\in\Rset^2$ is constrained to a set $\mbb{V}_i = \{v_i\colon |v_i|^2 \leq 230\sqrt{2}\ \wedge v_{d,i} \geq 205\sqrt{2}\}$ satisfying Assumption~\ref{assum:network_constraints}. In addition, we consider line constraints on $e_2=(1,2)$ such that $\mbb{I}_E = \{(i_{e,1},i_{e,2})\colon |i_{e,2}|_\infty\leq 35\}$. Following the discussion of Section~\ref{sec:network-model}, these line constraints induce voltage constraints as shown in \eqref{eq:voltage_constraints_network}. The geometry of these constraints is hard to visualise since $\mbb{V}_I\subset\Rset^4$; the only available visualisations are merely projection which may lose information. To visualise the effect of these constraints, we focus on the line currents which are computed explicitly as \[i_E = Y_E(\mc{B}_{I}v_I + \mc{B}_{0}(\mc{L}_{0}+Z_0^{-1})(Z_0^{-1}E_0 - \mc{B}_0^\top Y_E\mc{B}_Iv_I))\]
The network is initialised at $v_1(0) = (315.1,-67.31)$ and $v_2=(323.6,-13.7)$ such that $(v_1(0),v_2(0))\in\mbb{V}_I$. In Figure~\ref{fig:volt_constraints}, we show the behaviour of the line currents with and without these constraints when subject to the same load demand and renewable input. The response that does not contain these coupled constraints shows more aggressive towards the a steady state value; on the other hand the proposed approach is more conservative in the transient but achieves the same result in steady state. Although a simplified network is considered here for better visualisation of the methods and results, our methodology can easily be extended to more complicated networks in a straightforward manner as seen in the next section.







\subsection{Data Sources and Scenario Generation}
\label{sec:data_sources_and_scenario}




In this section, we analyse the performance of the proposed algorithm that combines the distributed MPC controller and market layer subject to different methods of forecasting. Our study considers scenarios that cover a wide range of \ac{DER} configurations and renewable inputs. For the wind turbine data, the scenarios were generated based on the data collected in \cite{verba2020plat} from real turbines in South Wales,UK. The data for PV panels is generated following a Clear-sky models and a burr distribution noise for Sheffield,UK. The consumption data is based on a data-set from \cite{Dua_2019}. The primary parameter that is iterated through is the time of year which varies between $[Summer,Spring/Autumn,Summer]$. This influences the week selection in the PV panels and sets the week for the consumption and wind turbine as well. The size of the panels varies between $[15.4m^2, 45.6m^2, 85.2m^2]$. The wind turbine selects a value in the time of the year that matches one of the 4 identified patterns in the data. The battery sizes vary between $[5 kWh, 13.5 kWh, 25 kWh]$. The number of units in each scenario vary between $[ 3, 6 , 9 ]$; all of these nodes are connected through a physical network, see Figure~\ref{fig:adept-data}. The combination of the resulting $447$ units into $74$ scenarios with all of them running for a week enables us evaluate the influence of the various forecasting methods and their effects on power balance, voltages, pricing and overall costs. \vspace{-0.3cm}
\begin{figure}[t!]
\centering
\begin{tikzpicture}[scale=0.45,font=\tiny]

  \begin{scope}[anchor=west,every node/.style={circle,thick,draw,minimum size=15,inner sep=0pt, outer sep=0pt},black]
    \node (S2) at (5,2) {$\Sigma_1$};
    \node (S3) at (8,-2.5) {$\Sigma_2$};    
    \node (S4) at (9,6) {$\Sigma_3$};
    \node (S5) at (-6.5,6) {$\Sigma_4$};
    \node (S6) at (-3,4) {$\Sigma_{5}$};
    \node (S7) at (2,2.5) {$\Sigma_6$};
    \node (S8) at (-4.5,-1) {$\Sigma_7$};
    \node (S9) at (2.5,4) {$\Sigma_8$};
    \node (S0) at (-8,-2.5) {$\Sigma_9$};
  \end{scope}

  \begin{scope}[anchor=west,every node/.style={circle,thick,draw,minimum size=15,inner sep=0pt, outer sep=0pt},green!50!black]
    \node (S1) at (0,0) {$\Sigma_0$};
  \end{scope}

\begin{scope}[blue,decoration={ markings,
    mark=at position 0.5 with {\arrow{>}}},every node/.style={circle}, every edge/.style={draw=blue!50!black,very thick,dashed}]
  \path[<->] (S1) edge node[below right] {$Z_{01}^{-1}$} (S2);
  \path[<->] (S1) edge node[right] {$Z_{02}^{-1}$} (S3); 
  \path[<->] (S1) edge node[above right] {$Z_{05}^{-1}$} (S6);
  \path[<->] (S1) edge node[below right] {$Z_{07}^{-1}$} (S8);            
  \path[<->] (S2) edge node[above right] {$Z_{16}^{-1}$} (S7); 
  \path [<->] (S2) edge node[right] {$Z_{13}^{-1}$} (S4);
  \path [<->] (S3) edge node[right] {$Z_{23}^{-1}$} (S4);
  \path [<->] (S3) edge node[below right] {$Z_{29}^{-1}$} (S0);
  \path [<->] (S4) edge node[above right] {$Z_{34}^{-1}$} (S5);
  \path [<->] (S5) edge node[above right] {$Z_{45}^{-1}$} (S6);
  \path [<->] (S5) edge node[right] {$Z_{47}^{-1}$} (S8);
  \path [<->] (S5) edge node[above right] {$Z_{49}^{-1}$} (S0);    
  \path [<->] (S9) edge node[right] {$Z_{38}^{-1}$} (S4);
  \path [<->] (S9) edge node[above left] {$Z_{58}^{-1}$} (S6);
\end{scope}

\end{tikzpicture}
\caption{Topology of the proposed network in Section~\ref{sec:data_sources_and_scenario}. The utility grid is represented by $\Sigma_0$ and each $\Sigma_i$ is a node comprising renewable sources and loads.\vspace{-1cm}}
\label{fig:adept-data}
\end{figure}
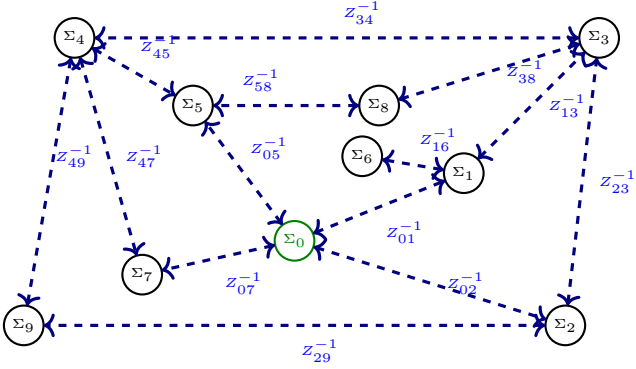

\subsection{Scenario Evaluation and Results}
\label{sec:eval_res}

The variation of the prices and cost that an individual \ac{DER} Unit received based on the type of forecast that was used can be seen in Figure~\ref{fig:for_eff_price}, where the top figures shows how the lack of forecasting will result in the unit needing to purchase energy at peak consumption times, while in the case of an existing forecast, these peaks can be predicted and energy can be bough more consistently. The lack of forecasting also influences the energy price in the scenario as unforeseen demand can increase the price of available energy.
\begin{figure}[t!]
  \centering
  \includegraphics[width=0.95\linewidth,height = 0.8\linewidth ]{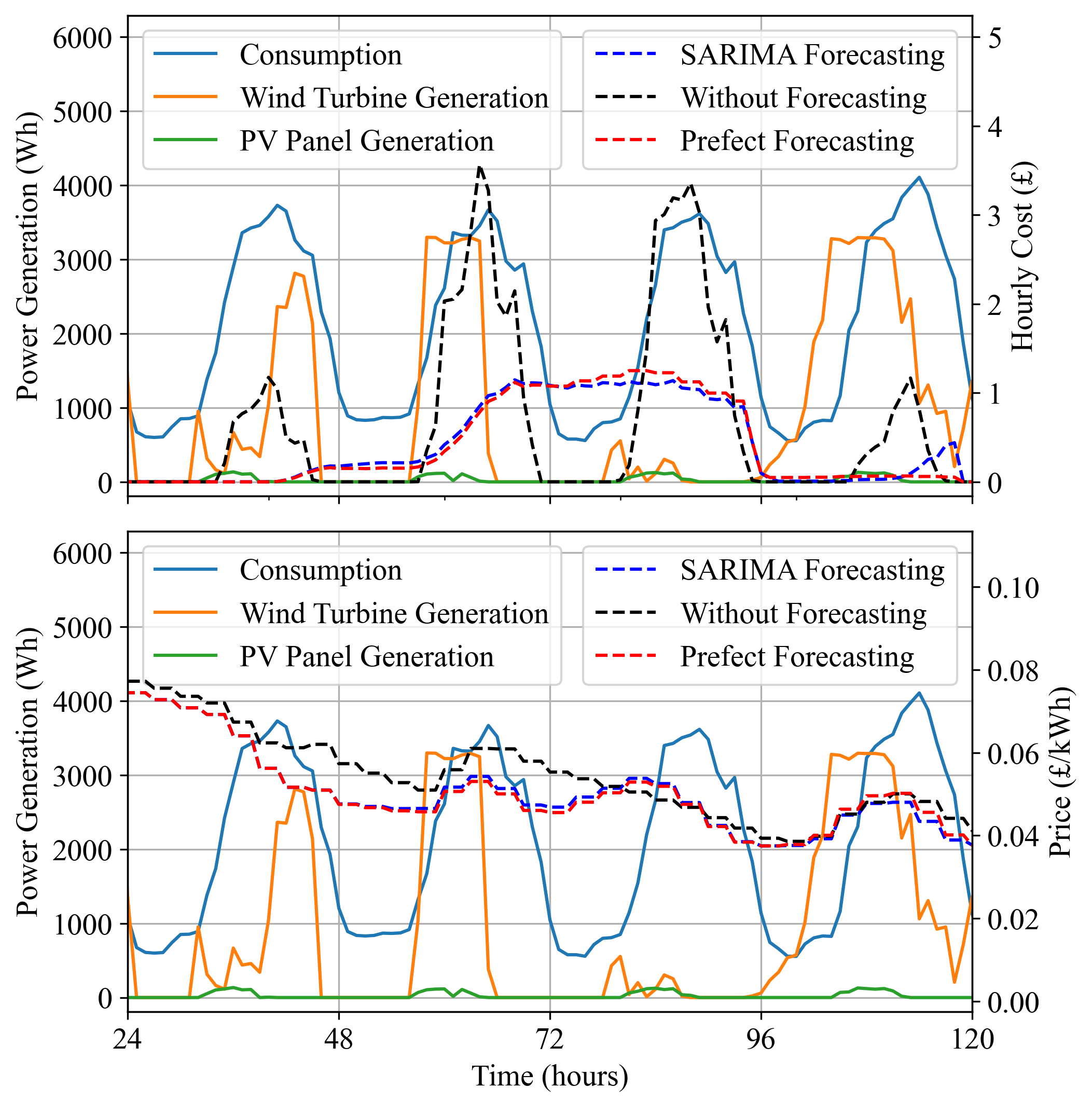}
  \caption{The effects of forecasting on the energy cost and price when considering a Single \ac{DER} given a set consumption and generation pattern. }
  \label{fig:for_eff_price}
\end{figure}
%
    
    
\begin{figure}[t!]
\centering
\includegraphics[width=0.95\linewidth]{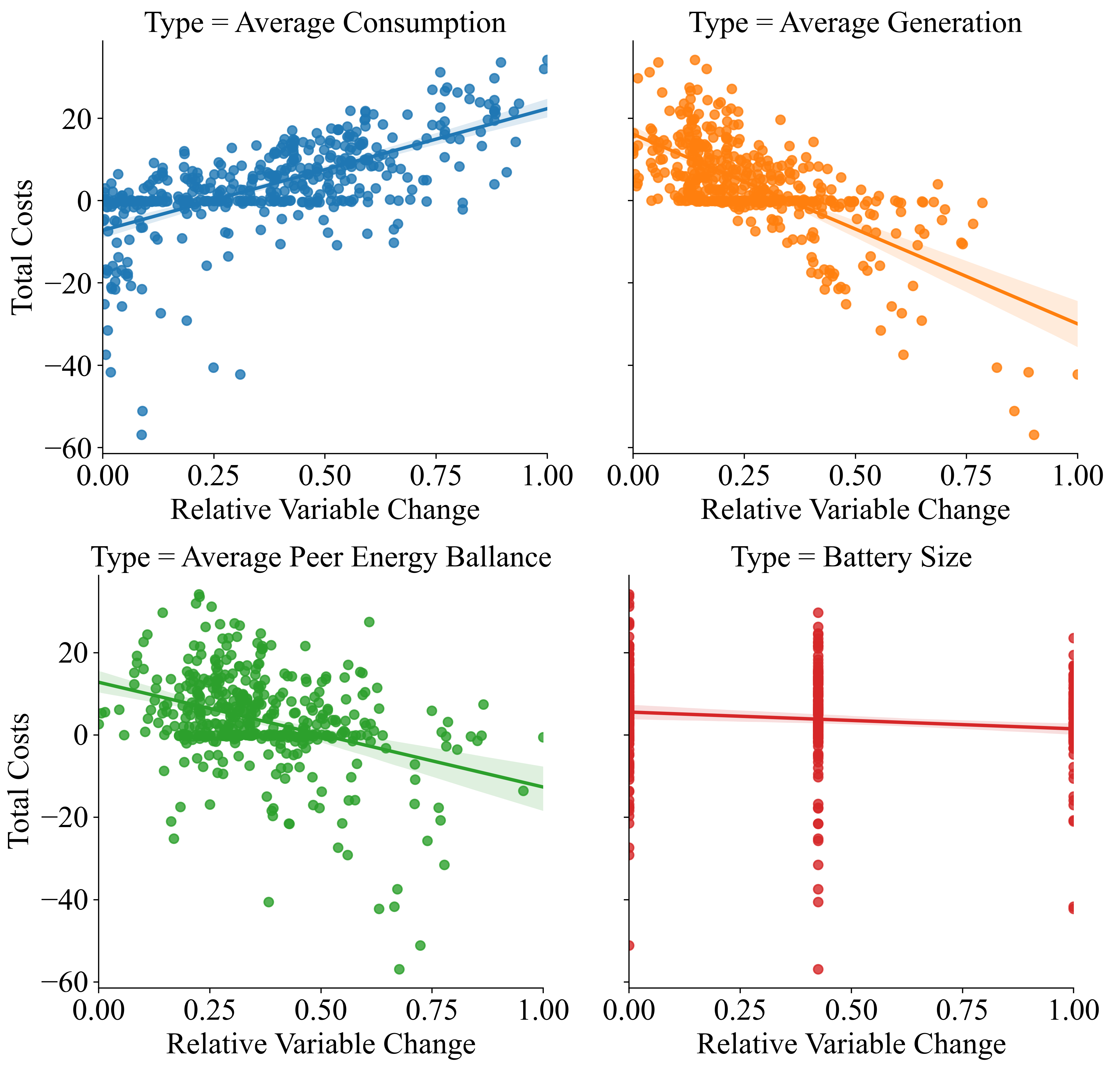}
\caption{Variation in Total Cost over the 7 day period for each individual unit based on normalised parameter values. The trends for each are highlighted using a linear regression approximation. \vspace{-1cm}}
\label{fig:overview_change}
\end{figure}

The variation of overall costs for each unit in all the scenarios as compared to the their generation, consumption, average peer energy balance and battery size can be seen in Figure~\ref{fig:overview_change}. The first two figures show the change of overall costs based on consumption and generation where we can see that while there is an obvious trend outliers show that other parameters are also at play. The effect of the peer units behaviour on price shows that as available energy in the peer network increases the total costs also decrease as purchasing energy becomes cheaper. The final plot shows that in these scenarios the battery size has an influence on the quality of results, but at these scales the $25kW$ battery provides little advantage over the $5kW$ battery. \par



\begin{table}[b!]
\centering
\caption{Overview of the benefits of SARIMA and perfect Forecasting}
\label{tab:results}
\begin{tabular}{@{}lccccc@{}}
\toprule
\multicolumn{1}{l}{\begin{tabular}[l]{@{}c@{}}Forecast \\ Type\end{tabular}} &
  \begin{tabular}[c]{@{}c@{}}Mean Cost\\ Difference\end{tabular} &
  \begin{tabular}[c]{@{}c@{}}Units \\ Worse Off\end{tabular} &
  \begin{tabular}[c]{@{}c@{}}Units \\ Better Off\end{tabular} &
  \begin{tabular}[c]{@{}c@{}}Worst \\ Unit\end{tabular} &
  \begin{tabular}[c]{@{}c@{}}Best\\ Unit\end{tabular} \\ \midrule
\begin{tabular}[l]{@{}l@{}}Naive/\\ SARIMA\end{tabular} &
  0.152 &
  170 &
  277 &
  -4.462 &
  5.152 \\
\begin{tabular}[l]{@{}l@{}}Naive/\\ Perfect\end{tabular} &
  0.207 &
  147 &
  300 &
  -3.725 &
  5.261 \\ \bottomrule
\end{tabular}
\end{table}

If we consider all the 74 scenarios, the total cost of running the units is £$1720.99$ for the naive forecasting, £$1652.69$ for the SARIMA based forecasting and £$1628.04$ for the perfect forecast. This shows that the inclusion of forecasting into control methods and markets on a similar setup has the potential to improve costs by $5.41\%$ in the case of perfect forecasting and can reduce costs by $3.96\%$ even when using simple methods such as ARIMA and SARIMAX. An overview the effects of forecasting on the whole scenario set can be seen in Table \ref{tab:results}.\par

\begin{table}[b!]
\centering
\caption{Mean Values of Unit specific quantiles for energy supply, demand and voltage}
\label{tab:sup_dem_voltage}
\begin{tabular}{@{}lcccc@{}}
\toprule
\multicolumn{1}{c}{\begin{tabular}[c]{@{}c@{}}Forecast \\ Type\end{tabular}} &
  \begin{tabular}[c]{@{}c@{}}Mean\\ Power Supply\\ .98 Quantile\end{tabular} &
  \begin{tabular}[c]{@{}c@{}}Mean\\ Power Demand\\ .98 Quantile\end{tabular} &
  \begin{tabular}[c]{@{}c@{}}Mean\\ $V_{RMS}$ Sag\\ .98 Quantile\end{tabular} &
  \begin{tabular}[c]{@{}c@{}}Mean\\ $V_{RMS}$\end{tabular} \\ \midrule
Naive   & 2818.22 & 1425.09 & 214.38 & 218.50 \\
SARIMA  & 2635.11 & 1435.49 & 214.97 & 218.62 \\
Perfect & 2562.41 & 1420.35 & 215.11 & 218.64 \\ \bottomrule
\end{tabular}
\end{table}
When considering the effects of forecasting on the stability of the resulting local grids we consider the $.98$ quantile values for power supply and demand for each unit as well as their mean voltage and $.98$ quantile of their $V_{RMS}$ Sag. We use the quantile values instead of the minimum values to filter out peaks in the system and to better understand the overall behaviour. The behaviour of each forecasting method can be seen in Table~ \ref{tab:sup_dem_voltage}, where we can see that while the effect of the forecasting methods is minimal on the Power Demand, its effects are considerable on the Supply and $V_{RMS}$ Sag.\vspace{-0.3cm}

\section{Conclusions and future work}
\label{sec:concl-future-work}

We proposed a distributed predictive controller capable of handling coupled constraints which optimises generation costs. These costs are obtained via a negotiation framework based on a \ac{MAS}; agents corresponding to each network agree upon pricing policies to transact their output power. Both control and market layers are subject to exogenous inputs dictating the interaction of the system with its environment. A rigorous analysis of the properties, existence of game theoretic equilibrium points for the market layer and recursive feasibility for the control layer, was given. The controller is proven to be recursively feasible in presence of time-varying information, both voltages form neighbouring nodes and forecasts. We have developed a testing and evaluation environment to assess the performance of our controller. We explored the effects of varying forecast accuracy in the controller performance and the pricing policies. The proposed system is a scalable solution to the problem of pricing and control. An avenue of future research is to investigate how global power flow phenomena, such as the existence of loop flows, affects our setting.\vspace{-0.3cm}

\section{Acknowledgements}
This work was supported by the UK Research and Innovation (UKRI) through the Engineering and Physical Sciences Research Council (EPSRC) as part of the Energy Revolution Research Consortium (ERRC) with the reference EP/S031863/1. \vspace{-0.8cm}

\bibliography{extracted.bib}
\begin{IEEEbiography}[{\includegraphics[width=1in,height=1.25in,clip,keepaspectratio]{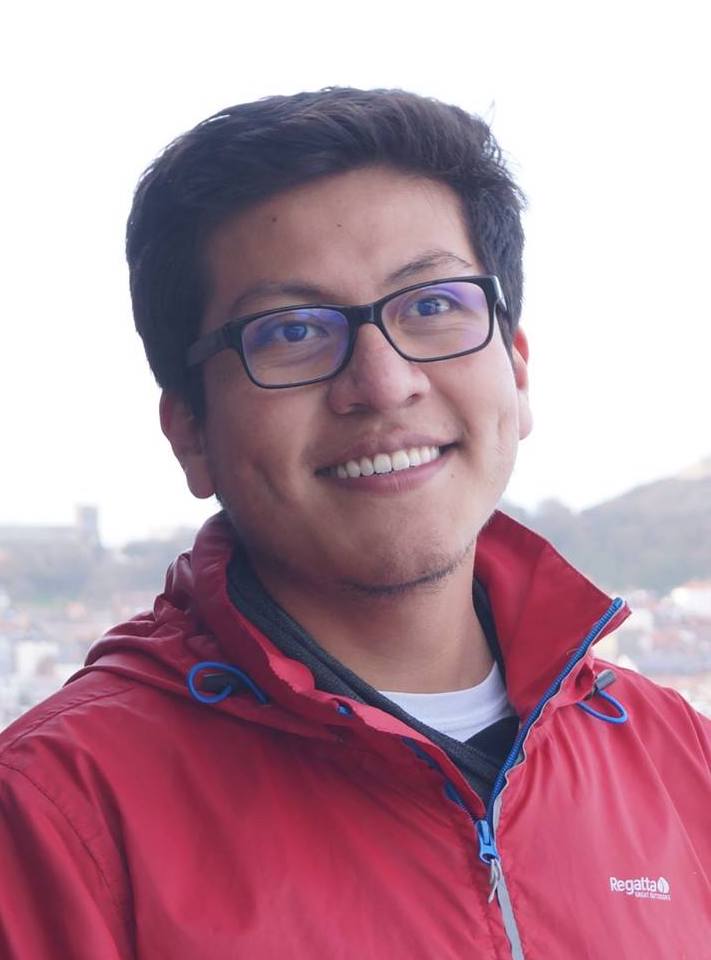}}]{Pablo R. Baldivieso-Monasterios} is a post-doctoral research associate in the Department of Automatic Control and Systems Engineering, University of Sheffield, UK. He received a Ph.D in robust distributed model predictive control from the University of Sheffield, UK in 2018. He is part of the Energy Revolution consortium (EnergyREV). His research interests include robust and distributed model predictive and optimisation-based control, and game theoretic methods for control and smartgrids.
\end{IEEEbiography}
\begin{IEEEbiography}[{\includegraphics[width=1in,height=1.25in,clip,keepaspectratio]{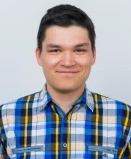}}]{Nandor Verba} is a research fellow at Coventry University. He received a PhD in the field of Internet of Things from Coventry University, UK for his work on Application Deployment Optimisation in heterogeneous Fog Computing environments. His current work on HEED and EnergyREV focuses on the digital architecture of Energy Systems and achieving their full potential through cyber-physical components with special interest in forecasting, anomaly detection and optimisation.
\end{IEEEbiography}
\vspace{-1cm}
\begin{IEEEbiography}[{\includegraphics[width=1in,height=1.25in,clip,keepaspectratio]{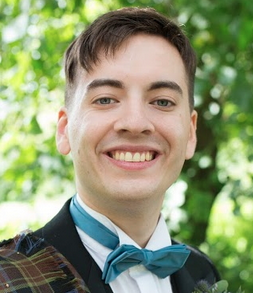}}]{Euan Morris} received the MEng degree (2014) and the PhD degree (2020) from the Department of Electronic and Electrical Engineering at the University of Strathclyde. He is currently a Research Associate at the both the Power Networks Demonstration Centre and with the EnergyREV research consortium with a focus on advanced artificial intelligence applications.
\end{IEEEbiography}
\vspace{-1cm}
\begin{IEEEbiography}[{\includegraphics[width=1in,height=1.25in,clip,keepaspectratio]{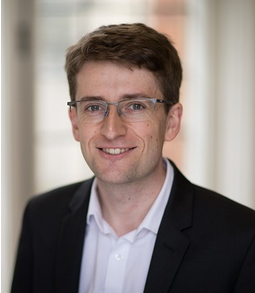}}]{Thomas Morstyn}  (M’16) received the B.Eng. degree (Hons.) in electrical engineering from the University of Melbourne in 2011, and the Ph.D. degree in electrical engineering from the University of New South Wales in 2016. He is a Lecturer of Power Electronics and Smart Grids with the School of Engineering, University of Edinburgh. He is also an Oxford Martin Associate with the Oxford Martin School, University of Oxford. His research interests include multiagent control and market design for integrating distributed energy resources into power system operations.
\end{IEEEbiography}
\vspace{-1cm}
\begin{IEEEbiography}[{\includegraphics[width=1in,height=1.25in,clip,keepaspectratio]{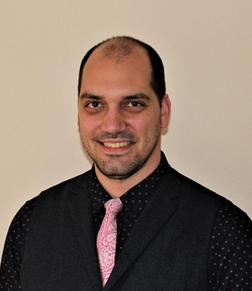}}]{George Konstantopoulos} received his Diploma and Ph.D.in Electrical and Computer Engineering from the University of Patras, Greece, in 2008 and 2012, respectively.Since 2013, he has been with the Department of Automatic Control and Systems Engineering, The University of Sheffield,UK, where he is currently a senior Lecturer. His current research interests include nonlinear modelling, control and stability analysis of power converter and electric machine systems with emphasis in microgrid operation, renewable energy systems and motor drives.
\end{IEEEbiography}
\vspace{-1cm}
\begin{IEEEbiography}[{\includegraphics[width=1in,height=1.25in,clip,keepaspectratio]{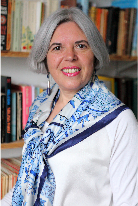}}]{Elena Gaura} is a Professor of Pervasive Computing at Coventry University, UK, researching into Cyber Physical Systems. She works on evidence driven designs for socio-technical complex systems governed by data and specifically through data-to-knowledge pipelines. Over the past few years she lead numerous international programmes supporting the development of science and innovation as well as developing capacity for research world wide and specifically in and with the Global South.
\end{IEEEbiography}
\vspace{-1cm}
\begin{IEEEbiography}[{\includegraphics[width=1in,height=1.25in,clip,keepaspectratio]{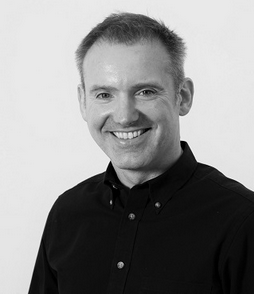}}]{Stephen McArthur} is the Distinguished Professor of Intelligent Energy Systems and a Deputy Associate Principal (for Research, Knowledge Exchange and Innovation) at the University of Strathclyde. His main area of expertise is intelligent system applications in energy covering smart grids, condition monitoring and data analytics. Data analytic solutions developed by his team have been deployed for a range of national and international energy companies. As Principal Investigator of the EnergyREV consortium, Stephen is responsible for ensuring that the programme aligns with the needs of PFER and delivers on its objectives and KPIs. His research within EnergyREV targets the use of artificial intelligence to support the smart functionality and interoperability required within smart local energy systems. \end{IEEEbiography}
\end{document}